%% file: main.tex
\documentclass[10pt]{article}
\input{macros}

\usepackage{multicol}
\usepackage{adjustbox}

\newcommand{\GIP}{\mathsf{GIP}}

\newcommand{\prt}{\mathsf{prt}}
\newcommand{\cprt}{\mathsf{cprt}}

\newcommand{\dom}{\operatorname{dom}}
\newcommand{\eps}{\varepsilon}
\newcommand{\bits}{\{0,1\}}
\newcommand{\F}{\mathbb{F}}
\newcommand{\NOF}{\mathrm{NOF}}

\title{Randomized Lifting for One-Way Number-on-Forehead Communication}

\author{Chenyu Wang\\
\emph{\small Department of Information Engineering}\\
\emph{\small The Chinese University of Hong Kong}\\
{\small Email: wc025@ie.cuhk.edu.hk}
}

\begin{document}
\maketitle

\begin{abstract}
We prove a lifting theorem from two-party public-coin one-way communication
to multiparty public-coin one-way number-on-forehead (NOF) communication.
For every fixed $k\ge2$ and prime $q>2k$, there is a generalized inner
product gadget $\GIP_{q,r}^k:(\F_q^r)^k\to\F_q$ with
$r=O_k(q/\log q)$ such that, for every partial Boolean function
$f:D\to\bits$, where $D\subseteq\F_q\times\F_q$,
\[
R_{1/3}^1(f)-O(1)
\le
R_{1/6}^{1,\NOF}\bigl(f\circ\GIP_{q,r}^k\bigr)
\le
R_{1/6}^1(f).
\]
Thus, composition with the gadget preserves one-way randomized communication
complexity up to an additive constant and a change in the error parameter.
The lower bound holds in the general one-way NOF model, where the last
player sees the entire gadget input. This extends the deterministic one-way
NOF lifting theorem of Yang and Zhang to randomized protocols, and extends
the randomized lifting result of Wang and Wu from the conservative model
to the general one-way NOF model.

Our proof introduces a one-way cylinder partition bound that lower bounds
public-coin one-way NOF communication complexity. We show that, for the
lifted function, this bound is at least half the one-way partition bound
of the outer function. The main technical step transfers a dual solution
between the two bounds, using M\"obius inversion and a discrepancy estimate
for generalized inner product to control the loss. Combining this transfer
with the characterization of two-party one-way randomized communication
complexity by the one-way partition bound yields the lifting theorem.
\end{abstract}

\section{Introduction}

Lifting theorems provide a general way to transfer hardness from a simpler computational model to a communication model by composing an outer function with a suitable gadget. Beginning with the simulation theorem of Raz and McKenzie, query-to-communication lifting has developed into a substantial theory, encompassing deterministic and randomized communication and a variety of gadgets \cite{raz1997separation,goos2020query,chattopadhyay2019query}. This framework has found applications to circuit complexity, proof complexity, data structures, and combinatorial optimization \cite{pitassi2017strongly,goos2018deterministic,lovett2020lifting,Collision,mao2025gadgetless}.

The number-on-forehead (NOF) model, introduced by Chandra, Furst, and Lipton \cite{chandra1983multi}, is a central model of multiparty communication complexity in which player \(i\) sees all input blocks except its own. The overlapping information available to the players makes NOF protocols substantially more difficult to analyze than their two-party counterparts. A classical line of work initiated by Babai, Nisan, and Szegedy developed discrepancy methods for this model and established strong lower bounds for explicit functions, including the generalized inner product function \cite{babai1992multiparty}. Thus, generalized inner product has long served as a canonical pseudorandom object for cylinder intersections in multiparty communication.

More recently, Yang and Zhang introduced a lifting framework connecting two-party one-way communication to multiparty one-way NOF communication \cite{yang2025deterministic}. Using a generalized inner product gadget, they proved a deterministic lifting theorem and asked whether an analogous randomized lifting theorem holds. Wang and Wu subsequently established randomized and quantum lifting theorems in the conservative one-way NOF model, in which the last player sees only the gadget output rather than the entire gadget input \cite{wang2026randomized}. Their results leave open the general one-way NOF model, where the last player sees the full gadget input and may exploit this additional information when producing the output.

In this paper, we prove a randomized one-way lifting theorem for the general NOF model using the generalized inner product gadget. For every partial Boolean outer function, the lifted problem preserves its public-coin one-way communication complexity up to an additive constant and a change in the error parameter. Our proof proceeds through a linear programming bound: we introduce a one-way cylinder partition bound and show that it preserves the one-way partition bound of the outer function under composition.

\paragraph{Problem setting:} 
Let \(q\) be prime, let \(k\ge2\), and define
\[
\GIP_{q,r}^{k}(x_1,\ldots,x_k)
\defeq
\sum_{j=1}^{r}
\prod_{i=1}^{k}x_{i,j}
\in\F_q,
\]
where \(x_i\in\F_q^r\) for every \(i\in[k]\).
Given a partial Boolean function $f:D\to\bits$ where $D\subseteq \F_q\times\F_q$, 
we consider the lifted function
\[
F(x_1,\ldots,x_k,z)
=
f\bigl(z,\GIP_{q,r}^{k}(x_1,\ldots,x_k)\bigr).
\]
Denote $\dom(f)$ (resp. $\dom(F)$) as the domain of $f$ (resp. $F$). 

\begin{definition}[Public-coin one-way NOF protocol]
\label{def:one-way-NOF}
Let $\cX=\cX_1\times\cdots\times\cX_k$ and $\cZ$ be finite input
spaces, and let $F:\dom(F)\to\bits$ be a partial function with
$\dom(F)\subseteq\cX\times\cZ$. On input $(x_1,\ldots,x_k,z)$,
player $i\in[k]$ sees $(x_{-i},z)$, where
$x_{-i}=(x_1,\ldots,x_{i-1},x_{i+1},\ldots,x_k)$, and player $k+1$
sees $x=(x_1,\ldots,x_k)$ but not $z$.
The players share a random string $\rho$, independent of the input.
Players $1,\ldots,k$ write binary messages on a shared blackboard in
this order, with
\[
m_i=M_i(x_{-i},z,m_1,\ldots,m_{i-1};\rho),
\qquad i\in[k].
\]
Each player speaks at most once, and empty messages are allowed.
The last player sends no message and outputs
\[
\Pi_\rho(x,z)=g(x,m_1,\ldots,m_k;\rho)\in\bits.
\]
We use the standard binary protocol-tree convention: after fixing
$\rho$, the identity of the next speaker and termination are determined
by the transcript, and the speaker indices along a path are
nondecreasing. Thus message boundaries carry no uncharged information.

The protocol is defined on all of $\cX\times\cZ$, and its communication
cost is
\[
\operatorname{cost}(\Pi)
=\max_{\rho,(x,z)\in\cX\times\cZ}\sum_{i=1}^k |m_i|.
\]
Its worst-case error is at most $\eps$ if
\[
\Pr_\rho[\Pi_\rho(x,z)\ne F(x,z)]\le\eps
\qquad\text{for every }(x,z)\in\dom(F).
\]
No correctness condition is imposed outside $\dom(F)$.
The quantity $R_{\eps}^{1,\NOF}(F)$ is the minimum communication cost
of such a protocol with worst-case error at most $\eps$.
\end{definition}

The basic question is whether composing with
\(\GIP_{q,r}^{k}\) preserves the randomized one-way communication
complexity of the outer function \(f\). Let $R_{\eps}^1(f)$ denote the
public-coin one-way communication complexity of $f$, where Alice receives
$z$, Bob receives $v$, and Alice sends one message to Bob. Let
$R_{\eps}^{1,\NOF}(F)$ denote the corresponding one-way NOF complexity;
a formal definition appears in \Cref{def:one-way-NOF}. Both quantities
use worst-case communication and worst-case error at most $\eps$.
For every $\eps\in(0,1/2)$, we have
\begin{equation}
\label{eq:NOF-upper-bound}
R_{\eps}^{1,\NOF}
\bigl(f\circ\GIP_{q,r}^{k}\bigr)
\le
R_{\eps}^{1}(f).
\end{equation}
Indeed, fix a two-party public-coin protocol for $f$, with Alice's message
$m(z,\rho)$ and Bob's output $b(v,m,\rho)$, where $\rho$ is the public
randomness. In the NOF protocol, player $1$ sees $z$ and writes
$m(z,\rho)$ on the shared blackboard; players $2,\ldots,k$ send empty
messages. The last player sees all of $x=(x_1,\ldots,x_k)$, computes
$v=\GIP_{q,r}^{k}(x)$ locally, and outputs $b(v,m(z,\rho),\rho)$.
For each $(x,z)\in\dom(F)$, this is exactly the original protocol on
$(z,v)\in\dom(f)$, so its error is at most $\eps$ and its communication
cost is unchanged. Minimizing over two-party protocols proves
\eqref{eq:NOF-upper-bound}.
The main content of the paper is the matching lower bound.

\begin{theorem}[Main theorem]
\label{thm:main}
Let $k\ge2$ be fixed, let $q>2k$ be prime, and suppose
\[
r\ge
2^{k-1}
\frac{q\ln3+2\ln q+\ln8}
{\ln(q/(2k))}.
\]
Then, for every partial Boolean function
$f:\F_q\times\F_q\to\bits$,
\[
R_{1/3}^{1}(f)-\log \log 6-2
\leq 
R_{1/6}^{1,\NOF} \bigl(f\circ\GIP_{q,r}^{k}\bigr)
\leq R_{1/6}^{1}(f).
\]
\end{theorem}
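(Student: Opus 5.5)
The upper bound is exactly \eqref{eq:NOF-upper-bound} with $\eps=1/6$, so all the work is in the lower bound. I will route it through linear programming relaxations, following the outline in the abstract. There are three steps: a cylinder-partition LP lower bounds NOF cost, this LP on the lifted function dominates half the one-way partition LP of $f$, and the latter characterizes $R^1(f)$ up to an additive constant and an error change.

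\textbf{Step 1 (one-way cylinder partition bound).} Fix the shared randomness $\rho$. A deterministic one-way NOF protocol of cost $c$ partitions $\cX\times\cZ$ into at most $2^c$ transcript classes. In each class, the set of $(x,z)$ reaching a given transcript is an intersection of sets $\{(x,z): (x_{-i},z)\in S_i\}$ for $i\in[k]$, and on such a class the output depends only on $x$. Call a \emph{one-way cylinder} a set of the form $C(x,z)=\bigcap_i S_i(x_{-i},z)$ together with a labeling $\ell: \cX\to\bits$. I will define $\cprt_\eps(F)$ as the minimum of $\sum w_{C,\ell}$ over nonnegative weights such that, for every $(x,z)$, the weights covering $(x,z)$ sum to $1$, and on $\dom(F)$ the correct-label mass is at least $1-\eps$. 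Averaging over $\rho$ gives $R^{1,\NOF}_\eps(F)\ge\log\cprt_\eps(F)$. I then pass to the LP dual: a function $\phi$ on inputs with error terms, such that every labeled one-way cylinder has correlation at most $1$.

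\textbf{Step 2 (transfer of dual solutions).} Take an optimal dual for the two-party one-way partition bound $\prt^1_\eps(f)$, whose objects are rectangles $A\times\F_q$ (Alice's message set) with Bob's labeling $v\mapsto\ell(v)$. I will lift it by $\Phi(x,z)=\phi(z,\GIP(x))/q^{rk}\cdot q$, which spreads the mass uniformly over the fibers of $\GIP$. The constraint to verify is that for each labeled one-way cylinder $(C,\ell)$, the $\Phi$-mass is at most about $2$ times the best rectangle constraint value. Expanding the label $\ell(x)$ and the indicator of $C$, I will decompose according to which players' cylinder sets matter. Möbius inversion over subsets of $[k]$ isolates the term where the cylinder is, in effect, a product of Alice's set and a function of $\GIP(x)$; that term reduces to a legitimate two-party one-way rectangle. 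The remaining terms are correlations of cylinder intersections with characters $\omega^{a\cdot\GIP(x)}$, which I will bound by the Babai--Nisan--Szegedy discrepancy estimate, giving roughly $(2k/q)^{r/2^{k-1}}$ per character. Summing over $q$ values of $a$, over $2^{q}$-ish label patterns via a union bound giving the $3^q$ factor, and over $q^2$ terms, the hypothesis on $r$ makes the total error at most $1/8$. This yields $\cprt_{\eps}(F)\ge\frac12\prt^1_{\eps'}(f)$ with only a small change of error.

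\textbf{Step 3 (closing).} I will use the known characterization that $R^1_{1/3}(f)\le\log\prt^1_{1/6}(f)+\log\log 6+O(1)$: the one-way partition bound can be rounded to a one-way protocol, since Alice samples a rectangle from the normalized weights, which costs $\log\prt$ bits, and amplification plus Newman-style accounting gives the $\log\log6$ term. Chaining the three steps gives the stated left inequality, with the additive loss $\log\log 6+2$ absorbing the factor $1/2$.

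I expect the main obstacle to be Step 2. The last player's label may depend on all of $x$, not just $\GIP(x)$, so the label interacts with the cylinder. Controlling the uniform version of this interaction, where the union bound over labels must be compatible with the discrepancy bound, is where the $3^q$ and the $2^{k-1}$ exponent come from, and getting the Möbius bookkeeping to yield exactly factor $1/2$ will be delicate.
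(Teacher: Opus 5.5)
Your Steps~1 and~3 follow the paper's outline: \Cref{thm:cylinder-partition}, and the Arunachalam--Doriguello--Jain bound, which you need with additive constant exactly $1$ rather than $O(1)$ to reach $\log\log 6+2$. The gap is Step~2, which is the entire content of the lower bound. It is not carried out, and the mechanism you sketch would not work.

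\textbf{Handling the label.} The last player's label $\ell:\cX\to\bits$ is adversarial and may depend on $\GIP(x)$ itself, for example $\ell(x)=f(z_0,\GIP(x))$. So it cannot be expanded and charged to character correlations. A union bound over labels would range over $2^{|\cX|}$ functions, not $2^q$. The decomposition that matters is also not over subsets of players $[k]$. The missing idea is to eliminate the label by the pointwise maximum in the reduced dual and to work fiber by fiber:
\begin{itemize}
\item Set $\nu_{x,z}=\mu_{z,\GIP(x)}/N_{\GIP(x)}$. Then the worst-label side of the constraint for $A\in\cA_k$ equals $\sum_v\E_{x\sim U_v}h_v(B_A(x))$.
\item For fixed $v$, the map $x\mapsto h_v(B_A(x))=g_v\bigl((\1_{A_z}(x))_{z\in S_v}\bigr)$ depends on $x$ only through the slice indicators.
\item M\"obius inversion over subsets $T\subseteq S_v\subseteq\cZ$ (Alice's inputs) writes it as $\sum_T c_{v,T}\1_{\bigcap_{z\in T}A_z}(x)$. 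This is a signed combination of cylinder intersections with $\sum_T|c_{v,T}|\le 3^{s_v-1}m_v$, and it is where the $3^q$ comes from.
\item GIP discrepancy then replaces $U_v$ by $U$ at additive cost $\eta_{k,q,r}3^{s_v-1}m_v$.
\item Under $U$, the main term is not a single rectangle. It is the average over $x$ of the two-party constraints \eqref{eq:h-lambda} with $B=B_A(x)$.
\end{itemize}

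\textbf{Bounding the dual mass.} The additive error $\eta_{k,q,r}\sum_{z,v}3^{s_v-1}\mu_{z,v}$ scales with the total dual mass. No choice of $r$ independent of $f$ controls it unless you show $\sum_{z,v}\mu_{z,v}\le q/\eps$. This follows from the singleton constraints $B=\{z\}$ and nonnegativity of the objective (\Cref{lem:dual-mass}). That bound, together with the $1/p_v\approx q$ loss in conditional mixing, produces the $q^2$ factor.

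\textbf{Keeping the error parameter fixed.} Ending with ``a small change of error'' does not give the stated theorem. The parameters $1/6$ and $1/3$ and the constant $\log\log 6+2$ require $\cprt^1_{1/6}(F)\ge\frac12\prt^1_{1/6}(f)$ with the same $\eps$. The paper gets this in two moves:
\begin{itemize}
\item It normalizes by the exact fiber sizes $N_v$ rather than $q^{rk-1}$, so the dual objective is preserved exactly.
\item It absorbs the additive violation $E_\mu\le 1$ by rescaling $(\nu,\theta)\mapsto(\nu,\theta)/(1+E_\mu)$. This loses at most a factor $2$ in the objective and leaves $\eps$ unchanged.
\end{itemize}
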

Throughout this paper, $\log$ denotes logarithm base 2 and $\ln$ denotes the natural logarithm. 

The upper bound follows from \eqref{eq:NOF-upper-bound}. It remains to prove the lower bound.

\paragraph{Organization.}
We first define the one-way cylinder partition bound for one-way NOF protocols
and derive its dual formulation.
We then prove the transfer theorem between the one-way partition bound
and the one-way cylinder partition bound.
The remaining section establishes the technical ingredients used in
the transfer argument: the multilinear coefficient estimate based on
Möbius inversion, the discrepancy bound for generalized inner product,
and the resulting conditional mixing estimate.

\section{One-way cylinder partition bound}

We use the public-coin version of the one-way NOF model
in \cite{yang2025deterministic}.

\begin{definition}[One-way partition bound]
    \label{def:partition-bd}
    Given $\eps\in (0,1/2)$, the one-way $\eps$-partition bound of $f:\cZ\times \cV\to \bits$, denoted as $\prt_{\eps}^1(f)$, is the optimal value of the following linear program. 
    \begin{subequations}
        \begin{align}
            &\textbf{Primal}\nonumber\\
            \min\quad
            &\sum_{B\subseteq\mathcal Z} w_B\\
            \mathrm{s.t.}\quad
            &\sum_{B\ni z}w_B=1
            &&\forall z\in\mathcal Z,\\
            &\sum_{b\in \{0,1\}}w_{B,v,b}=w_B
            &&\forall B,v,\\
            &\sum_{B\ni z}w_{B,v,f(z,v)}
            \ge 1-\eps
            &&\forall (z,v)\in \dom(f),\\
            &w_B,w_{B,v,b}\ge0.
        \end{align}
    \end{subequations}
    The dual is given by 
    \begin{subequations}
        \begin{align}
            &\textbf{Dual}\nonumber\\
            \max\quad
            &(1-\eps)
            \sum_{(z,v)\in \dom(f)}\mu_{z,v}
            -\sum_{z\in\mathcal Z}\lambda_z \label{prt-D-obj}\\
            \mathrm{s.t.}\quad
            &\sum_{\substack{
                z\in B,\ (z,v)\in \dom(f)\\
                f(z,v)=b}}
            \mu_{z,v}
            \le \Lambda_{B,v}
            &&\forall B,v,b,
            \label{prt-D-mu-Lambda}\\
            &\sum_v\Lambda_{B,v}
            \le
            1+\sum_{z\in B}\lambda_z
            &&\forall B,  \label{prt-D-Lambda-lambda}\\
            &\mu_{z,v}\ge0,\quad
            \Lambda_{B,v}\ge0,\lambda_z\in\R.
        \end{align}
    \end{subequations}

\end{definition}
For convenience, we extend the dual variables by setting $\mu_{z,v}=0$  whenever $(z,v)\notin \dom(f)$.

The following characterization of the one-way randomized communication complexity by the one-way partition bound comes from 
Arunachalam, Doriguello and Jain \cite{arunachalam2023note}. 

\begin{theorem}[\cite{arunachalam2023note}]
\label{thm:partition-characterization}
For any partial function
$f:\cX\times\cY\to\bits$ over finite input spaces and any
$\eps,\delta\in(0,1/2)$,
\[
\log\prt_\eps^1(f)
\le
R_\eps^1(f),
\]
and
\[
R_{\eps+\delta}^1(f)
\le
\log\prt_\eps^1(f)
+
\log\log(1/\delta)+1.
\]
\end{theorem}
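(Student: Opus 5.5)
The statement has two halves, and I would prove each directly from the linear program in \Cref{def:partition-bd}. For the first inequality, I would turn a protocol into a feasible primal solution. For the second, I would turn a primal solution into a protocol by rejection sampling over the sets $B$.

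\emph{Lower bound $\log\prt_\eps^1(f)\le R_\eps^1(f)$.} Fix a public-coin one-way protocol of cost $c$ and worst-case error $\eps$, with message $m(z,\rho)$ and output $b(v,m,\rho)$. I would pad every message to exactly $c$ bits, so that for each fixed $\rho$ there are at most $2^c$ messages. For each $\rho$ and each message $m$, let $B_{\rho,m}=\{z: m(z,\rho)=m\}$. Define
\[
w_B=\sum_{\rho,m:\,B_{\rho,m}=B}\Pr[\rho],
\qquad
w_{B,v,b}=\sum_{\rho,m:\,B_{\rho,m}=B,\ b(v,m,\rho)=b}\Pr[\rho].
\]
The constraints are then checked as follows.
\begin{itemize}
\item For each $\rho$, every $z$ lies in exactly one $B_{\rho,m}$, so $\sum_{B\ni z}w_B=1$.
\item Since Bob's output is one of $0$ or $1$, we get $\sum_b w_{B,v,b}=w_B$.
\item For $(z,v)\in\dom(f)$, the sum $\sum_{B\ni z}w_{B,v,f(z,v)}$ is exactly the success probability, which is at least $1-\eps$.
\end{itemize}
The objective equals $\mathbb{E}_\rho[\#\{m \text{ used under }\rho\}]\le 2^c$. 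Hence $\prt_\eps^1(f)\le 2^c$.

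\emph{Upper bound.} Take an optimal primal solution of value $P=\prt^1_\eps(f)$ and set $T=\lceil P\ln(1/\delta)\rceil$. The protocol uses public randomness as follows.
\begin{enumerate}
\item The players draw i.i.d.\ sets $B_1,\dots,B_T$, each from the distribution $w_B/P$.
\item Alice sends the least index $t$ with $z\in B_t$, or a special failure symbol if there is none.
\item Given $B=B_t$, Bob outputs $b$ with probability $w_{B,v,b}/w_B$, using further shared random bits independent of everything else.
\end{enumerate}

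Each $B_t$ contains $z$ with probability $\sum_{B\ni z}w_B/P=1/P$. So the failure probability is $(1-1/P)^T\le e^{-T/P}\le\delta$. Conditioned on success, the chosen $B$ has distribution $w_B\cdot\mathbf 1[z\in B]/\sum_{B'\ni z}w_{B'}=w_B\mathbf 1[z\in B]$, by the first primal constraint. Bob's success probability is therefore $\sum_{B\ni z}w_{B,v,f(z,v)}\ge1-\eps$, and the total error is at most $\eps+\delta$.

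The message ranges over $T+1$ values, so its length is $\lceil\log(T+1)\rceil$. The one step needing care is the arithmetic showing
\[
\lceil\log(\lceil P\ln(1/\delta)\rceil+1)\rceil\le\log P+\log\log(1/\delta)+1.
\]
I would use $\ln(1/\delta)<\log(1/\delta)$, which leaves slack of a factor $\log e$ to absorb the two ceilings, together with $P\ge1$ and $\delta<1/2$. If the constant turns out to be tight, I would instead let Alice send $t$ with a length adapted to the range actually needed. Everything else is routine bookkeeping.
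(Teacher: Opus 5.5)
Your two constructions are the natural ones for this characterization; the paper itself gives no proof and only cites Arunachalam--Doriguello--Jain. The first half is fine. Padding keeps messages distinct only because, under the protocol-tree convention, the messages for a fixed $\rho$ form a prefix-free set, and you should say so. In the rejection-sampling protocol, your computations are also correct: the per-sample hit probability $1/P$, the conditional law $w_B\1[z\in B]$ of the selected set, and the total error $\eps+\delta$.

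The step that fails is the one you flagged. The inequality $\lceil\log(\lceil P\ln(1/\delta)\rceil+1)\rceil\le\log P+\log\log(1/\delta)+1$ is false whenever $1<P\ln(1/\delta)<2\ln 2$.
\begin{itemize}
\item In that range $T=2$, so your message has three possible values and costs $2$ bits.
\item The right-hand side equals $1+\log\bigl(P\ln(1/\delta)\log e\bigr)<1+\log 2=2$.
\item For example, $P=1$ and $\delta=0.3$ give cost $2$ against a bound of about $1.80$.
\end{itemize}
The $\log e$ slack cannot pay for the extra failure symbol when $P\ln(1/\delta)$ is small. Your fallback of sending $t$ with an adapted length does not help either, because cost is measured in the worst case.

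The fix is to drop the failure symbol. If no $B_t$ contains $z$, Alice sends $t=1$ (any index will do), since that event is already charged to $\delta$. The message then takes $T=\lceil P\ln(1/\delta)\rceil$ values, and $2^{\lceil\log T\rceil}$ is the least power of two that is at least $P\ln(1/\delta)$. Because $P\ge1$ and $\delta<1/2$, we have $P\ln(1/\delta)>\ln 2>1/2$. Hence this power of two is less than $2P\ln(1/\delta)<2P\log(1/\delta)$, so the cost is strictly below $\log P+\log\log(1/\delta)+1$, as required.
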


Analogous to the two-party one-way model, the one-way cylinder partition bound
(cf. \Cref{def:cylinder-partition}) lower bounds the randomized
communication complexity of one-way NOF protocols. By applying
\Cref{thm:partition-characterization} with
$\eps=\delta=1/6$, it suffices
to prove the following theorem in order to prove the main theorem. 

\begin{theorem}
\label{thm:lift-rcc}
Let $k\ge2$ be fixed, let $q>2k$ be prime, and suppose
\[
r\ge
2^{k-1}
\frac{q\ln3+2\ln q+\ln8}
{\ln(q/(2k))}.
\]
Then, for every partial Boolean function
$f:\F_q\times\F_q\to\bits$,
\[
\cprt_{1/6}^1
\bigl(f\circ\GIP_{q,r}^{k}\bigr)
\ge
\frac12\prt_{1/6}^1(f).
\]
\end{theorem}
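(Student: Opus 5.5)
The plan is to argue entirely through LP duality. Fix an optimal dual solution $(\mu,\Lambda,\lambda)$ for $\prt_{1/6}^1(f)$, with value $\prt_{1/6}^1(f)$, and lift it to a feasible dual solution for $\cprt_{1/6}^1(F)$, $F=f\circ\GIP_{q,r}^k$, losing at most a factor $2$ in the objective. I will use the dual of the one-way cylinder partition bound as derived in \Cref{def:cylinder-partition}. I expect it to look like the dual of \Cref{def:partition-bd}, with two changes:
\begin{itemize}
\item the set $B\subseteq\cZ$ is replaced by a one-way NOF ``leaf'', i.e.\ a set of $(x,z)$ cut out by the messages of players $1,\ldots,k$;
\item the last player's answer is indexed by the full $x\in(\F_q^r)^k$ rather than by $v$.
\end{itemize}

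\textbf{Step 1 (the lift).} Let $N_v=|\{x:\GIP_{q,r}^k(x)=v\}|$, which is close to $q^{rk-1}$. Set
\[
\mu'_{x,z}=\mu_{z,\GIP(x)}/N_{\GIP(x)},
\qquad
\lambda'_z=\lambda_z,
\]
possibly after scaling by $1/2$. The primal constraints $\sum w=1$ for each $z$ are unchanged, and $\sum_x\mu'_{x,z}=\sum_v\mu_{z,v}$, so the objective $(1-\eps)\sum\mu'-\sum\lambda'$ equals the original objective, up to the chosen scaling. The scaling by $\tfrac12$ (applied to $\mu$, with $\lambda$ adjusted accordingly) is where the factor $\tfrac12$ in the statement comes from. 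It absorbs an additive error term in feasibility, using that the optimal dual value is at least $1$.

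\textbf{Step 2 (the reduction for feasibility).} Feasibility requires that for every leaf $C$, every output function of the last player, and every $b$, the $\mu'$-mass of correctly labelled points in $C$ is at most $1+\sum_{z\in C}\lambda_z$ (after summing the $\Lambda$'s over $x$). For fixed $z$, the set of $x$ compatible with a leaf is a cylinder intersection $S=S_1\cap\cdots\cap S_k$, where $S_i$ does not depend on $x_i$. So I need to compare, for each $v$, the quantity $|S\cap\GIP^{-1}(v)|/N_v$ with the idealised quantity $|S|/q^{rk}$. If these agree up to a small additive error, then the $\mu'$-mass inside $C$ is essentially the $\mu$-mass of a fractional mixture of two-party rectangles $B\times\{v\}$. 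The original constraints \eqref{prt-D-mu-Lambda}--\eqref{prt-D-Lambda-lambda} then bound it.

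There is one complication: the last player sees $x$ and may answer differently on different $x$ with the same $v$. I will handle this by averaging. A leaf together with the last player's rule induces, for each $z$, a distribution over $v$-dependent labels. Its contribution is dominated by the best $v$-measurable rule plus the discrepancy error.

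\textbf{Step 3 (conditional mixing, the main obstacle).} I need a bound of the form
\[
\Bigl|\Pr_{x\in S}[\GIP(x)=v]-\tfrac1q\Bigr|\le\mathrm{err}
\]
for cylinder intersections that may be tiny. The leaf sets are not dense, so plain discrepancy only gives an additive bound of $q^{-\Omega(r/2^{k})}$ times $q^{rk}$. I will therefore work with indicator weights rather than conditional probabilities. Write $\mathbf 1[\GIP=v]=\frac1q\sum_{a\in\F_q}\omega^{a(\GIP-v)}$, and bound each nontrivial character sum $\sum_{x\in S}\omega^{a\GIP(x)}$ by the Babai--Nisan--Szegedy discrepancy estimate. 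This is $k-1$ applications of Cauchy--Schwarz, giving roughly $(2k/q)^{r/2^{k-1}}q^{rk}$.

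The M\"obius inversion enters when expressing the leaf structure, which is a nested sequence of cylinders across players $1,\ldots,k$, as a signed combination of product-like indicators. The multilinear coefficients control how many such terms appear, at most $3^q$ and $q^2$ factors. This matches the hypothesis
\[
r\ge2^{k-1}\frac{q\ln3+2\ln q+\ln8}{\ln(q/(2k))},
\]
which makes the total error at most $\tfrac18$ of the mass. I expect the hardest part to be this error accounting: showing that the summed errors over all $v$, all $b$ and all leaves stay within a $\tfrac12$-fraction of the objective, uniformly over all partial $f$.
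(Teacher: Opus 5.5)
\textbf{There is a genuine gap: the mechanism that makes the lifted dual feasible against an $x$-dependent last player is missing, and M\"obius inversion is placed where it is not needed.}

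Your outer architecture matches the paper's: the lift $\nu_{x,z}=\mu_{z,\GIP(x)}/N_{\GIP(x)}$, a character-sum discrepancy bound, and a final rescaling that absorbs an additive error. But each slice $A_z=\{x:(x,z)\in A\}$ of a set $A\in\cA_k$ is already a cylinder intersection, so there is nothing to invert in the ``nested sequence of cylinders.''

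The real difficulty is the last player. After maximizing over its $x$-dependent answer, the left side of the reduced constraint for the lifted $\nu$ is $\sum_v\E_{x\sim U_v}h_v(B_A(x))$. Here $h_v(B_A(x))=\max_{b\in\bits}\sum_{z:\,x\in A_z,\,f(z,v)=b}\mu_{z,v}$ is a nonlinear function of the indicators $\1_{A_z}(x)$. Sets such as $\{x\in A_z: b(x)=\beta\}$ are in general not cylinder intersections, so discrepancy does not apply to them.

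Your claim that the contribution is dominated by the best $v$-measurable rule plus a discrepancy error is false as stated. Take $A_{z_0}=E$ and $A_{z_1}=\cX\setminus E$ for a set $E$ of density close to $1/2$ depending only on $x_{-1}$; both are cylinder intersections. Set $f(z_0,\cdot)=0$, $f(z_1,\cdot)=1$ and $\mu_{z_0,v}=\mu_{z_1,v}=1$. Then the $x$-dependent rule earns $1$ at every $x$, while every $v$-measurable rule earns about $1/2$.

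The paper instead compares $\E_{U_v}$ and $\E_U$ of the \emph{same} pointwise maximum.
\begin{itemize}
\item Under $U$ the point $x$ is independent of $v$. So for each fixed $x$, the two-party constraint with $B=B_A(x)$ gives $\sum_v h_v(B_A(x))\le 1+\sum_{z\in B_A(x)}\lambda_z$.
\item To bound the difference, it expands $g_v(u)=\max_b\sum_{z\in S_v,\,f(z,v)=b}\mu_{z,v}u_z$ as a multilinear polynomial $\sum_{T\subseteq S_v}c_{v,T}\prod_{z\in T}u_z$. This gives $h_v(B_A(x))=\sum_T c_{v,T}\1_{C_T}(x)$, where $C_T=\bigcap_{z\in T}A_z$ is a cylinder intersection.
\item M\"obius inversion gives $\sum_T\abs{c_{v,T}}\le 3^{s_v-1}m_v$. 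Each term is then controlled by the conditional mixing bound $\abs{\Pr[U_v\in C]-\Pr[U\in C]}\le\eta_{k,q,r}$.
\end{itemize}
This is the source of the $3^q$ in the hypothesis on $r$.

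The bookkeeping also needs repair.
\begin{itemize}
\item The cylinder dual has one variable $\theta_{x,z}$ per point, because the primal covering constraints are indexed by $(x,z)$, not by $z$. The correct lift is $\theta_{x,z}=\lambda_z/\abs{\cX}$. This makes $1+\sum_{(x,z)\in A}\theta_{x,z}=1+\E_{x\sim U}\sum_{z\in B_A(x)}\lambda_z$, which is exactly the right side produced by the comparison with $U$.
\item The additive error $E_\mu$ must be at most $1$ in absolute terms, not a fraction of the mass or of the objective. Then scaling both $\nu$ and $\theta$ by $1/(1+E_\mu)$ loses at most a factor $2$. The fact that the optimum is at least $1$ plays no role.
\item For this you need the mass bound $\sum_{z,v}\mu_{z,v}\le q/\eps=6q$. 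It follows from the singleton constraints $B=\{z\}$ and nonnegativity of the dual objective. It gives $E_\mu\le 6q\cdot 3^{q-1}\eta_{k,q,r}=2q3^q\eta_{k,q,r}\le 8q^23^q\gamma_{k,q,r}\le1$. One factor $q$ comes from the mass bound and one from $\eta_{k,q,r}\le 4q\gamma_{k,q,r}$; the hypothesis on $r$ gives exactly $\gamma_{k,q,r}\le 1/(8q^23^q)$.
\item Your Step~3 target $\abs{\Pr_{x\in S}[\GIP(x)=v]-1/q}\le\mathrm{err}$ cannot hold for tiny $S$; take $S$ a single point. What is needed is conditioning on the dense event $\GIP(x)=v$. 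An additive estimate suffices precisely because the total dual mass is bounded.
\end{itemize}
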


The sufficiency of \Cref{thm:lift-rcc} will be shown later.

Denote $\cZ=\cV=\F_q$ and $\cX=(\F_q^r)^k$. Then $\GIP_{q,r}^k$ is a mapping from $\cX$ to $\cV$. Write $x=(x_1,\ldots,x_k)\in\cX$. 
A set $C\subseteq\cX$ is a cylinder in coordinate $i$ if membership in
$C$ does not depend on $x_i$. A cylinder intersection is an intersection
of one cylinder in each coordinate. Equivalently, $C$ is a cylinder
intersection if
\[
\1_C(x_1,\ldots,x_k)
=
\prod_{i=1}^k\phi_i(x_{-i})
\]
for some functions
\[
\phi_i:(\F_q^r)^{k-1}\to\bits.
\]

For $A\subseteq\cX\times\cZ$ and $z\in\cZ$, define
\[
A_z
\defeq
\{x\in\cX:(x,z)\in A\}.
\]
Define
\[
\cA_k
\defeq
\left\{
A\subseteq\cX\times\cZ:
A_z\text{ is a cylinder intersection in }\cX
\text{ for every }z\in\cZ
\right\}.
\]

The following definition is an analogous partition bound for NOF models. 
\begin{definition}[One-way cylinder partition bound]
    \label{def:cylinder-partition}
Given \(\eps\in(0,1/2)\), the one-way cylinder partition bound of \(F:\cX\times \cZ\to \bits\),
denoted by \(\cprt_{\eps}^1(F)\), is the optimal value of the
following linear program. 
\begin{subequations}
\begin{align}
    &\textbf{Primal}\nonumber\\
\min\quad
&\sum_{A\in\mathcal A_k}w_A\\
\mathrm{s.t.}\quad
&\sum_{\substack{A\in\mathcal A_k\\
                  (x,z)\in A}}
  w_A=1
&&\forall (x,z)\in\mathcal X\times\mathcal Z,\\
&\sum_{b\in\{0,1\}}w_{A,x,b}=w_A
&&\forall A\in\mathcal A_k,\ x\in\mathcal X,\\
&\sum_{\substack{A\in\mathcal A_k\\
                  (x,z)\in A}}
  w_{A,x,F(x,z)}
  \ge 1-\eps
&&\forall (x,z)\in \dom(F),\\
&w_A\ge0
&&\forall A\in\mathcal A_k,\\
&w_{A,x,b}\ge0
&&\forall A\in\mathcal A_k,\ x\in\mathcal X,\ b\in\{0,1\}.
\end{align}
\end{subequations}
The dual is the following linear program. 
\begin{subequations}
    \begin{align}
&\textbf{Dual}\nonumber\\
\max\quad
&(1-\eps)
  \sum_{(x,z)\in \dom(F)}\nu_{x,z}
  -\sum_{(x,z)\in\mathcal X\times\mathcal Z}
   \theta_{x,z}
   \label{cprt-D-obj}\\
\mathrm{s.t.}\quad
&\sum_{\substack{z:\,(x,z)\in A\\
                (x,z)\in \dom(F)\\
                F(x,z)=b}}
  \nu_{x,z}
  \le \Gamma_{A,x}
&&\forall A\in\mathcal A_k,\ x\in\mathcal X,\ b\in\{0,1\}, \label{cprt-D-nu-Gamma}\\
&\sum_{x\in\mathcal X}\Gamma_{A,x}
  \le
  1+\sum_{(x,z)\in A}\theta_{x,z}
&&\forall A\in\mathcal A_k,
\label{cprt-D-Gamma-theta}\\
&\nu_{x,z}\ge0, \Gamma_{A,x}\ge0, 
\theta_{x,z}\in\R
&&\forall x\in \cX, z\in \cZ, A\in \cA_k.
\end{align}
\end{subequations}

\end{definition}

For simplicity, we omit the subscripts and use $(\nu,\theta,\Gamma)$ to denote the variables in the dual of the one-way cylinder partition bound. 

The logarithm of the one-way cylinder partition bound lower bounds the
public-coin one-way NOF communication complexity, as follows. 
\begin{theorem} 
    \label{thm:cylinder-partition}
    For any $\eps\in (0,1/2)$ and any partial Boolean function
    $F$ with $\dom(F)\subseteq\cX\times\cZ$,
    \begin{align*}
        R_{\eps}^{1,\NOF} (F)
        \geq \log \cprt_{\eps}^1(F). 
    \end{align*}
\end{theorem}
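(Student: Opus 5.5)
Starting from a protocol $\Pi$ of cost $c$ and worst-case error $\eps$, I will build a feasible primal solution of the one-way cylinder partition LP whose value is at most $2^c$. For each fixed random string $\rho$, let $T_\rho$ be the set of full transcripts $(m_1,\ldots,m_k)$ reachable on some input. Since the binary protocol tree has depth at most $c$ and message boundaries carry no uncharged information, $|T_\rho|\le 2^c$. For a transcript $t=(m_1,\ldots,m_k)$, define
\[
A_{\rho,t}\defeq\{(x,z)\in\cX\times\cZ:\ \Pi_\rho \text{ produces transcript } t \text{ on }(x,z)\}.
\]
Then I will set
\[
w_A\defeq\sum_{\rho}\Pr[\rho]\cdot\bigl|\{t\in T_\rho: A_{\rho,t}=A\}\bigr|,
\qquad
w_{A,x,b}\defeq\sum_{\rho}\Pr[\rho]\cdot\bigl|\{t: A_{\rho,t}=A,\ g(x,t;\rho)=b\}\bigr|.
\]

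The key step is to check that each $A_{\rho,t}\in\cA_k$. Fix $z$. Membership of $x$ in $(A_{\rho,t})_z$ means that $M_i(x_{-i},z,m_1,\ldots,m_{i-1};\rho)=m_i$ for every $i\in[k]$, where the $m_j$ are the fixed components of $t$. For fixed $z,\rho,t$, the $i$-th condition depends only on $x_{-i}$, so it defines a cylinder in coordinate $i$. Hence $(A_{\rho,t})_z$ is an intersection of one cylinder per coordinate, which is a cylinder intersection. The protocol-tree convention needs some care here. The event ``the transcript equals $t$'' is the conjunction of the conditions ``player $i$'s bits along the path equal the corresponding bits of $t$'', and each of these is still a function of $(x_{-i},z)$ once the earlier messages are fixed. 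Empty messages are harmless. So membership is a product $\prod_i\phi_i(x_{-i})$, and I expect this step to be the main point to write carefully.

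Feasibility then follows directly. For each fixed $\rho$, the sets $\{A_{\rho,t}\}_{t\in T_\rho}$ partition $\cX\times\cZ$, because the protocol is defined everywhere and each input yields exactly one transcript. Averaging over $\rho$ gives $\sum_{A\ni(x,z)}w_A=1$. The identity $\sum_b w_{A,x,b}=w_A$ holds because the output $g(x,t;\rho)$ is a function of $x$, $t$ and $\rho$ only, and $x$ is visible to the last player. This is exactly why the primal variable $w_{A,x,b}$ is indexed by $x$ and not by $z$. For $(x,z)\in\dom(F)$, the quantity $\sum_{A\ni(x,z)}w_{A,x,F(x,z)}$ equals $\Pr_\rho[\Pi_\rho(x,z)=F(x,z)]\ge1-\eps$. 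If the same set $A$ arises from several transcripts, the counts add consistently, since the output is defined per transcript. Finally, the objective satisfies $\sum_A w_A=\mathbb E_\rho|T_\rho|\le 2^c$. Therefore $\cprt^1_\eps(F)\le 2^{R^{1,\NOF}_\eps(F)}$, and taking logarithms gives the theorem.

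If the randomness is not finitely supported, I will first note that one may assume it is, or replace the sums over $\rho$ by integrals. In either case the LP objective is unchanged.
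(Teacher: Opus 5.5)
Your proposal is correct and follows the paper's proof essentially step for step: the same transcript cells $A_{\rho,t}$, the same observation that $\1_{(A_{\rho,t})_z}(x)=\prod_{i=1}^k\1\{M_i(x_{-i},z,t_1,\ldots,t_{i-1};\rho)=t_i\}$ is a cylinder intersection, the same weights $w_A,w_{A,x,b}$ obtained by averaging transcript counts over $\rho$, and the same bound $\sum_A w_A=\E_\rho|T_\rho|\le 2^c$. The paper spells out two details that you leave implicit: $t$ ranges over $T_\rho$ in the definition of $w_{A,x,b}$, and $g(x,t;\rho)$ is extended arbitrarily to a bit for $x$ incompatible with $t$.
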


\begin{proof}
Let $\Pi$ be a public-coin one-way NOF protocol for $F$ with error at
most $\eps$ and communication cost $c$. Fix the public randomness
$\rho$, and let $T_\rho$ be the set of complete transcripts attained
by $\Pi_\rho$ on $\cX\times\cZ$. The binary protocol tree has depth
at most $c$, so $|T_\rho|\le 2^c$.
For $t=(t_1,\ldots,t_k)\in T_\rho$, define its transcript cell
\[
A_{\rho,t}=\{(x,z):\Pi_\rho\text{ has transcript }t\text{ on }(x,z)\}.
\]
These cells partition $\cX\times\cZ$. For each fixed $z$, we have
\[
\1_{(A_{\rho,t})_z}(x)
=\prod_{i=1}^k
\1\{M_i(x_{-i},z,t_1,\ldots,t_{i-1};\rho)=t_i\}.
\]
The $i$th factor is independent of $x_i$, so each slice is a cylinder
intersection and $A_{\rho,t}\in\cA_k$. The last player's output on this
cell is $g_{\rho,t}(x)\defeq g(x,t_1,\ldots,t_k;\rho)$, a function
of $x$ alone. For $x$ incompatible with $t$, extend $g_{\rho,t}(x)$
arbitrarily to a bit.

For every $A\in\cA_k$, $x\in\cX$, and $b\in\bits$, set
\begin{align*}
w_A
&=\E_\rho\left[\sum_{t\in T_\rho}\1\{A_{\rho,t}=A\}\right],\\
w_{A,x,b}
&=\E_\rho\left[\sum_{t\in T_\rho}
  \1\{A_{\rho,t}=A\}\1\{g_{\rho,t}(x)=b\}\right].
\end{align*}
All weights are nonnegative. Since the transcript cells form a
partition for every $\rho$, for every $(x,z)\in\cX\times\cZ$,
\[
\sum_{A\ni(x,z)}w_A=1.
\]
Also, $\sum_{b\in\bits}w_{A,x,b}=w_A$ for every $A,x$, since
$g_{\rho,t}(x)$ is a bit. Finally, for every $(x,z)\in\dom(F)$,
\[
\sum_{A\ni(x,z)}w_{A,x,F(x,z)}
=\Pr_\rho[\Pi_\rho(x,z)=F(x,z)]
\ge 1-\eps.
\]
Thus these weights are feasible for the primal program in
\Cref{def:cylinder-partition}, and their objective value satisfies
\[
\cprt_{\eps}^1(F)
\le\sum_{A\in\cA_k}w_A
=\E_\rho[|T_\rho|]
\le 2^c.
\]
Taking logarithms and minimizing over $\Pi$ proves the theorem.
\end{proof}

Combining \eqref{cprt-D-nu-Gamma} and \eqref{cprt-D-Gamma-theta}, we obtain that for any $A\in \cA_k$, 
\begin{align} \label{cprt-D-reduced-constraint}
    \sum_x \max_{b\in \bits}
    \sum_{\substack{z:\,(x,z)\in A\\
                (x,z)\in \dom(F)\\
                F(x,z)=b}}
    \nu_{x,z}
    \leq 
    1+ \sum_{(x,z)\in A} \theta_{x,z}. 
\end{align}

Now we show that \Cref{thm:lift-rcc} implies \Cref{thm:main}. 

\begin{proof}[Proof of \Cref{thm:main} via \Cref{thm:lift-rcc}]
    By \Cref{thm:lift-rcc}, for any $f$, we have 
    \[
    \cprt_{1/6}^1
    \bigl(f\circ\GIP_{q,r}^{k}\bigr)
    \ge
    \frac12\prt_{1/6}^1(f). 
    \]
    Then 
    \begin{align} \label{eq:cprt-prt}
        \log \cprt_{1/6}^1
    \bigl(f\circ\GIP_{q,r}^{k}\bigr)
    \geq
    \log \prt_{1/6}^1(f)-1. 
    \end{align}
    By \Cref{thm:cylinder-partition}, setting $\eps=1/6$, we have 
    \begin{align} \label{eq:NOF-cprt}
        R_{1/6}^{1,\NOF} (f\circ\GIP_{q,r}^{k})
        \geq \log \cprt_{1/6}^1(f\circ\GIP_{q,r}^{k}). 
    \end{align}
    By \Cref{thm:partition-characterization}, setting $\eps=\delta=1/6$, we have 
    \begin{align} \label{prt-R}
        \log \prt_{1/6}^1(f)
        \geq 
        R_{1/3}^1 (f)-\log \log 6-1. 
    \end{align}
    Combining \Cref{eq:cprt-prt,eq:NOF-cprt,prt-R}, we obtain 
    \begin{align*}
        R_{1/6}^{1,\NOF} (f\circ\GIP_{q,r}^{k})
        \geq R_{1/3}^1 (f)-\log \log 6-2.
    \end{align*}
    Combined with the upper bound \eqref{eq:NOF-upper-bound} 
    \begin{align*}
        R_{1/6}^{1,\NOF} (f\circ\GIP_{q,r}^{k})
        \leq R_{1/6}^1 (f),
    \end{align*}
    it finishes the proof. 
\end{proof}

\section{Proof of the main theorem}

We first analyze the dual of the one-way partition bound. For simplicity, we omit the subscripts and use $(\mu,\lambda,\Lambda)$ to denote the variables in the dual of the one-way partition bound. 
For a feasible $\mu$, define 
\begin{align} \label{def-hv}
    h_v(B)
    \defeq
    \max_{b\in\bits}
    \sum_{\substack{z\in B\\(z,v)\in \dom(f)\\f(z,v)=b}}
    \mu_{z,v}
\end{align}
for $B\subseteq \cZ$. 
Intuitively, $\mu_{z,v}$ is a nonnegative dual weight on the valid input
$(z,v)$. It is not necessarily a probability distribution. 
Moreover, \(h_v(B)\) is the maximum \(\mu\)-mass of inputs in \(B\) consistent with a single output \(b\), when Bob's input is \(v\).

Let $U$ be the uniform distribution on $\cX$. For $v\in\cV$, let
$U_v$ be the uniform distribution on $\{x\in\cX:\GIP_{q,r}^k(x)=v\}$. 
These sets are nonempty when $r\ge1$.

In the dual of \Cref{def:partition-bd}, eliminating all the $\Lambda_{B,v}$ variables gives the following constraints: for any $B\subseteq \cZ$, 
\begin{align} \label{prt-D-reduced-constraint}
    \sum_v \max_{b\in \bits} \sum_{\substack{z\in B\\(z,v)\in \dom(f)\\f(z,v)=b}} \mu_{z,v}
    \leq 
    1+\sum_{z\in B} \lambda_z.
\end{align}
Using the definition in \eqref{def-hv}, it is equivalent to 
\begin{align} \label{eq:h-lambda}
    \sum_vh_v(B)
    \le1+\sum_{z\in B}\lambda_z. 
\end{align}

Denote
\[
\gamma_{k,q,r}
\defeq
\left(\frac{2k}{q}\right)^{r/2^{k-1}}.
\]
Whenever $\gamma_{k,q,r}<\frac1q$, 
denote
\[
\eta_{k,q,r}
\defeq
\frac{2\gamma_{k,q,r}}
{1/q-\gamma_{k,q,r}}.
\]

In order to prove \Cref{thm:lift-rcc}, we prove the following theorem. 

\begin{theorem}
\label{thm:lift-partition}
Suppose $q$ is prime, $k\ge2$, $r\ge1$ such that $\gamma_{k,q,r}<1/q$. 
Then, for any partial function $f:\F_q\times\F_q\to\bits$,
\[
\cprt_{1/6}^1
\bigl(f\circ\GIP_{q,r}^{k}\bigr)
\ge
\frac{\prt_{1/6}^1(f)}
{1+2q3^q\eta_{k,q,r}}.
\]
\end{theorem}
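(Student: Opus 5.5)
Take an optimal dual solution $(\mu,\lambda)$ of the one-way partition bound for $f$, with value $\prt_{1/6}^1(f)$ and constraints \eqref{eq:h-lambda}. We build a dual solution $(\nu,\theta)$ of the cylinder partition bound for $F=f\circ\GIP_{q,r}^k$ and show it is feasible in the reduced form \eqref{cprt-D-reduced-constraint} after dividing by $1+2q3^q\eta_{k,q,r}$. The natural lift spreads $\mu_{z,v}$ uniformly over the fibre of $v$:
\[
\nu_{x,z}=\mu_{z,\GIP(x)}\cdot\frac{1}{|\cX|}\cdot q,
\qquad
\theta_{x,z}=\lambda_z\cdot\frac{1}{|\cX|}.
\]
Since the fibres have size roughly $|\cX|/q$, we then have $\sum_x\nu_{x,z}\approx\sum_v\mu_{z,v}$ and $\sum_x\theta_{x,z}=\lambda_z$. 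The objective is therefore essentially preserved. If the fibre sizes are not exactly equal, we normalize by $U_v$ instead, i.e.\ take $\nu_{x,z}=\mu_{z,v}U_v(x)$, which makes the objective exactly equal.

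Feasibility is the core. Fix $A\in\cA_k$. The left side of \eqref{cprt-D-reduced-constraint} is
\[
\sum_v\sum_{x:\GIP(x)=v}\max_b\sum_{z:x\in A_z,\,f(z,v)=b}\nu_{x,z}.
\]
For each $x$, let $B(x)=\{z:x\in A_z\}\subseteq\cZ$. Grouping points by $(B,v)$ turns the left side into
\[
\sum_v\sum_{B}U_v\bigl(\{x:B(x)=B\}\bigr)\,h_v(B),
\]
while the right side is $1+\sum_B U(\{x:B(x)=B\})\sum_{z\in B}\lambda_z$. If $U_v(\{B(x)=B\})$ equalled $U(\{B(x)=B\})$ for every $v$, then \eqref{eq:h-lambda} applied to each $B$ and averaged with weights $p_B=U(\{B(x)=B\})$ would give feasibility exactly. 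So the whole task is a conditional mixing estimate: the events $E_B=\{x: B(x)=B\}=\bigcap_{z\in B}A_z\cap\bigcap_{z\notin B}\overline{A_z}$ should have nearly the same probability under $U_v$ as under $U$.

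The obstacle is that $E_B$ is not a cylinder intersection, because of the complements. Möbius inversion handles this. Write
\[
\1_{E_B}=\sum_{S\supseteq B}(-1)^{|S\setminus B|}\1_{\bigcap_{z\in S}A_z},
\]
where each $\bigcap_{z\in S}A_z$ is again a cylinder intersection. The $\GIP$ discrepancy bound (Babai--Nisan--Szegedy style, with the $\F_q$ version giving $\gamma_{k,q,r}=(2k/q)^{r/2^{k-1}}$) yields
\[
\bigl|\Pr_U[C\wedge \GIP=v]-\Pr_U[C]/q\bigr|\le\gamma_{k,q,r}
\]
for every cylinder intersection $C$. Dividing by $\Pr[\GIP=v]\ge 1/q-\gamma$ gives
\[
|U_v(C)-U(C)|\le\eta_{k,q,r}/2\cdot(\text{const}).
\]
Summing over the at most $2^q$ sets $S$ gives $|U_v(E_B)-U(E_B)|\lesssim 2^q\eta$. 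Summing over the $2^q$ choices of $B$ gives the factor $3^q$ via $\sum_B 2^{q-|B|}=3^q$.

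It remains to bound $h_v(B)$. We have $h_v(B)\le\sum_z\mu_{z,v}$, and after rescaling we may assume $\sum_{z,v}\mu_{z,v}$ is at most a constant multiple of the objective. We would first try normalizing $\mu$ (WLOG $\lambda\ge0$ or $\mu_{z,v}\le1$ from the $B=\{z\}$ constraints) so that the total error $\sum_v\sum_B|U_v(E_B)-U(E_B)|h_v(B)$ is at most $q3^q\eta$ times the right-hand side. This is the same thing as showing the scaled pair $(\nu,\theta)/(1+2q3^q\eta)$ is feasible, with objective $\prt/(1+2q3^q\eta)$. The extra factor $q$ and the constant $2$ come from the sum over $v$ and the two-sided error, and fixing the bookkeeping so that the error is charged against $1$ rather than against the possibly negative $\lambda$ terms is the delicate step.

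Finally, under the hypothesis on $r$ in \Cref{thm:lift-rcc}, we will check that $2q3^q\eta\le1$, which yields the factor $1/2$.
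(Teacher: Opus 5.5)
Your architecture matches the paper's. You use the same lift $\nu_{x,z}=\mu_{z,v}U_v(x)$, $\theta_{x,z}=\lambda_z/|\cX|$, and reduce feasibility to comparing $U_v$ with $U$ on the atoms $E_B$. Your inclusion--exclusion of $\1_{E_B}$ into the cylinder intersections $C_S=\bigcap_{z\in S}A_z$ is the paper's multilinear expansion of $g_v$ organized by atoms rather than monomials, with $c_{v,S}=\sum_{B\subseteq S}(-1)^{|S|-|B|}h_v(B)$. The $\GIP$ discrepancy then gives $|U_v(C)-U(C)|\le\eta_{k,q,r}$, and you finish by dividing by $1+E$.

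The gap is the step you flag as delicate and leave open. Dividing $(\nu,\theta)$ by $1+E$ works only if the total error $\sum_v\sum_B|U_v(E_B)-U(E_B)|\,h_v(B)$ is at most an absolute constant $E$, uniformly in $A$. An error proportional to the right-hand side $1+\sum_{(x,z)\in A}\theta_{x,z}$ is \emph{not} equivalent to feasibility of the scaled pair, because $\theta$ is rescaled too.

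An absolute bound on the error requires an a priori bound on the dual mass $\sum_{z,v}\mu_{z,v}$, and none of your suggestions supplies one. First, $\lambda$ is a free variable, since the primal covering constraint is an equality, so ``WLOG $\lambda\ge0$'' is unjustified. Second, the singleton constraints give only $\sum_v\mu_{z,v}\le1+\lambda_z$, not $\mu_{z,v}\le1$. Third, rescaling $(\mu,\lambda)$ is not free because of the constant $1$ in the constraints. Finally, what is needed is an absolute bound on the mass, not one relative to the objective.

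The missing ingredient is the paper's dual-mass lemma. Summing the singleton constraints over $z\in\F_q$ gives $\sum_{z,v}\mu_{z,v}\le q+\sum_z\lambda_z$. Nonnegativity of the objective (the optimum is $\prt_{1/6}^1(f)\ge1$) gives $\sum_z\lambda_z\le(1-\eps)\sum_{z,v}\mu_{z,v}$. Together these give $\eps\sum_{z,v}\mu_{z,v}\le q$, i.e.\ the mass is at most $6q$.

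You also need the sharper estimate $h_v(B)\le\sum_{z\in B}\mu_{z,v}$ in place of $h_v(B)\le m_v$. Combined with $\sum_{B\ni z}2^{q-|B|}=3^{q-1}$, it bounds the total error by $\eta_{k,q,r}3^{q-1}\sum_{z,v}\mu_{z,v}\le 2q3^q\eta_{k,q,r}$, which is exactly the stated constant. Your cruder bound gives only $6q3^q\eta_{k,q,r}$.

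Your attribution of the factor $2q$ to ``the sum over $v$ and the two-sided error'' is therefore off. Both error terms already sit inside $\eta_{k,q,r}$, and $2q3^q=3^{q-1}\cdot q/\eps$. With these two fixes, $(\nu,\theta)/(1+2q3^q\eta_{k,q,r})$ is feasible with value $\prt_{1/6}^1(f)/(1+2q3^q\eta_{k,q,r})$, as in the paper.
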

\begin{proof}[Proof of \Cref{thm:lift-rcc} by \Cref{thm:lift-partition}]

If $q>2k$ and
\[
r\ge
2^{k-1}
\frac{q\ln3+2\ln q+\ln8}
{\ln(q/(2k))},
\]
then
\[
\gamma_{k,q,r}
\le
\frac{1}{8q^2 3^q}.
\]
In particular, $q\gamma_{k,q,r}\le1/2$, and hence
\[
\eta_{k,q,r}
=
\frac{2q\gamma_{k,q,r}}
{1-q\gamma_{k,q,r}}
\le
4q\gamma_{k,q,r}.
\]
Therefore,
\[
2q3^q\eta_{k,q,r}
\le
8q^2 3^q\gamma_{k,q,r}
\le1.
\]
Thus \Cref{thm:lift-partition} implies \Cref{thm:lift-rcc}.
\end{proof}

We need the following lemmas. 
\begin{lemma}
\label{lem:dual-mass}
Suppose $\cZ=\F_q$, $\eps\in(0,1/2)$, and
$(\mu,\lambda,\Lambda)$ is feasible for the dual of the one-way
$\eps$-partition bound with nonnegative objective value. Then
\[
\sum_{z,v}\mu_{z,v}
\le
\frac{q}{\eps}.
\]
\end{lemma}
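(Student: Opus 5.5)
Apply the reduced dual constraint \eqref{eq:h-lambda} to the singleton sets $B=\{z\}$ for each $z\in\cZ$, and combine the resulting inequalities with the nonnegativity of the dual objective \eqref{prt-D-obj}.

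\textbf{Step 1: per-row bound.} Fix $z\in\cZ$ and take $B=\{z\}$ in \eqref{eq:h-lambda}. For each $v$ with $(z,v)\in\dom(f)$, only one output value $b=f(z,v)$ is possible. Hence $h_v(\{z\})=\mu_{z,v}$; when $(z,v)\notin\dom(f)$, the convention $\mu_{z,v}=0$ gives $h_v(\{z\})=0$. Therefore
\[
\sum_{v}\mu_{z,v}\le 1+\lambda_z\qquad\text{for every }z\in\cZ.
\]

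\textbf{Step 2: sum over rows.} Summing over the $q$ elements of $\cZ=\F_q$ gives
\[
M\defeq\sum_{z,v}\mu_{z,v}\le q+\sum_{z}\lambda_z .
\]

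\textbf{Step 3: use the objective.} By hypothesis the objective is nonnegative:
\[
(1-\eps)M-\sum_z\lambda_z\ge 0,\qquad\text{i.e.}\qquad \sum_z\lambda_z\le (1-\eps)M.
\]
Substituting into Step 2 yields $M\le q+(1-\eps)M$, so $\eps M\le q$. Since $\eps>0$, this gives $M\le q/\eps$.

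The only point needing care is Step 1: we must check that the maximum over $b$ in $h_v(\{z\})$ really collapses to the single term $\mu_{z,v}$. This holds because $f$ is a function on its domain. It also uses that the reduced constraint \eqref{prt-D-reduced-constraint} follows from \eqref{prt-D-mu-Lambda}–\eqref{prt-D-Lambda-lambda}, where $\Lambda_{B,v}\ge h_v(B)$. The rest is elementary.
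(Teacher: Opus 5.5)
Your proposal is correct and follows the paper's proof essentially verbatim. You apply the reduced constraint to the singletons $B=\{z\}$ to get $\sum_v\mu_{z,v}\le 1+\lambda_z$, sum over the $q$ values of $z$, and combine this with nonnegativity of the objective to get $\eps\sum_{z,v}\mu_{z,v}\le q$; your check that $h_v(\{z\})=\mu_{z,v}$, using $\mu_{z,v}\ge0$ and that $f(z,v)$ is a single bit, is a detail the paper leaves implicit.
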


\begin{proof}
    Applying \eqref{prt-D-reduced-constraint} for $B=\{z\}$, we obtain  
\begin{align*}
    \sum_v \mu_{z,v}
    \leq 
    1+\lambda_z.
\end{align*}
Summing over $z\in \cZ=\F_q$, we obtain 
\begin{align*}
    \sum_{z,v} \mu_{z,v}
    \leq 
    q+\sum_{z}\lambda_z.
\end{align*}
Note that the objective value \eqref{prt-D-obj} is 
\begin{align*}
    (1-\eps)\sum_{z,v}\mu_{z,v}
    -\sum_{z}\lambda_z 
    \leq q- \eps \sum_{z,v}\mu_{z,v}. 
\end{align*}
If the objective value \eqref{prt-D-obj} is nonnegative, we have 
\begin{align*}
    \sum_{z,v} \mu_{z,v} \leq \frac{q}{\eps}.
\end{align*}
\end{proof}

For fixed $v$, denote $S_v=\{z:\mu_{z,v}>0\}$ and $s_v=|S_v|$. 
For $A\in \cA_k$ and $x\in \cX$, define $B_A(x)\defeq \left\{z\in \cZ:(x,z)\in A\right\}$. 

\begin{lemma}
\label{lem:cell-threshold-transfer}
Suppose $q$ is prime, $k\ge2$, $r\ge1$ such that $\gamma_{k,q,r}<1/q$. 
Let $(\mu,\lambda,\Lambda)$ be feasible for the dual of the one-way
partition bound. Then, for every $A\in\cA_k$ and $v\in\cV$,
\[
\abs{
\E_{x\sim U_v}h_v(B_A(x))
-
\E_{x\sim U}h_v(B_A(x))
}
\le
\eta_{k,q,r}3^{s_v-1}
\sum_z\mu_{z,v}.
\]
Consequently,
\[
\sum_v
\abs{
\E_{x\sim U_v}h_v(B_A(x))
-
\E_{x\sim U}h_v(B_A(x))
}
\le
\eta_{k,q,r}
\sum_{z,v}3^{s_v-1}\mu_{z,v}.
\]
\end{lemma}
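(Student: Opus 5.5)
Fix $A\in\cA_k$ and $v$. Since $h_v(B)$ depends only on $B\cap S_v$, write $h_v(B_A(x))=\sum_{T\subseteq S_v}h_v(T)\,\1\{B_A(x)\cap S_v=T\}$. Rather than expand the indicator of the exact set $B_A(x)\cap S_v=T$, I would use Möbius inversion over the subset lattice of $S_v$. Define $g(T)\defeq\sum_{T'\subseteq T}(-1)^{|T\setminus T'|}h_v(T')$, so that $h_v(B)=\sum_{T\subseteq B\cap S_v}g(T)$. Then
\[
h_v(B_A(x))=\sum_{T\subseteq S_v} g(T)\prod_{z\in T}\1\{x\in A_z\}.
\]
The point of this form is that each product $\prod_{z\in T}\1_{A_z}(x)$ is an intersection of cylinder intersections. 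It is therefore itself a cylinder intersection $C_T\subseteq\cX$; for $T=\emptyset$, $C_T=\cX$. The difference of expectations becomes
\[
\sum_{T\subseteq S_v} g(T)\bigl(\Pr_{U_v}[C_T]-\Pr_U[C_T]\bigr).
\]

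Next I bound the coefficients. Since $h_v$ is monotone and $0\le h_v(T')\le\sum_z\mu_{z,v}$, we have $|g(T)|\le 2^{|T|-1}\sum_z\mu_{z,v}$ for $T\ne\emptyset$, using the cancellation of the alternating sum (or simply $2^{|T|}\sum_z\mu_{z,v}$). The $T=\emptyset$ term vanishes. Summing gives
\[
\sum_{\emptyset\ne T\subseteq S_v}2^{|T|-1}=\tfrac{1}{2}(3^{s_v}-1)\le 3^{s_v-1}\cdot\tfrac32 .
\]
To land exactly on $3^{s_v-1}$, I would sharpen this by one of two routes. One is to pair $T$ with $T\cup\{z_0\}$ for a fixed $z_0\in S_v$. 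The other is to note that the mixing bound below carries the factor $2$ in $\eta$. I expect the constant bookkeeping to be adjustable here.

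The key analytic input is conditional mixing for cylinder intersections: for every cylinder intersection $C$,
\[
\bigl|\Pr_{U_v}[C]-\Pr_U[C]\bigr|\le \eta_{k,q,r}/2
\quad\text{or}\quad \le\eta_{k,q,r}\cdot\Pr_U[C],
\]
with whichever normalization makes the sum close. To prove it, write $\Pr_{U_v}[C]=\Pr_U[C\wedge \GIP=v]/\Pr_U[\GIP=v]$. Expand $\1\{\GIP=v\}=\frac1q\sum_{a\in\F_q}\omega^{a(\GIP(x)-v)}$. The nontrivial characters $\omega^{a\GIP}$, for $a\ne0$, are again $\GIP$ up to scaling one coordinate. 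The BNS discrepancy estimate then gives $|\E_U[\1_C\,\omega^{a\GIP}]|\le\gamma_{k,q,r}$, with the standard $(2k/q)^{r/2^{k-1}}$ bound for $\F_q$ multilinear forms via $k-1$ Cauchy–Schwarz steps and Schwartz–Zippel. Hence $|\Pr_U[C\wedge\GIP=v]-\Pr_U[C]/q|\le\gamma$, and in particular $|\Pr_U[\GIP=v]-1/q|\le\gamma$. Dividing, with the hypothesis $\gamma<1/q$, yields
\[
\bigl|\Pr_{U_v}[C]-\Pr_U[C]\bigr|\le \frac{\gamma+\gamma\Pr_U[C]}{1/q-\gamma}\le\frac{2\gamma}{1/q-\gamma}=\eta_{k,q,r}.
\]

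Combining, $|\E_{U_v}-\E_U|\le\eta\sum_{T\ne\emptyset}|g(T)|$, which gives the first claim once the coefficient sum is bounded by $3^{s_v-1}\sum_z\mu_{z,v}$. For the consequence, sum over $v$ and rewrite $3^{s_v-1}\sum_z\mu_{z,v}=\sum_z3^{s_v-1}\mu_{z,v}$.

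I expect the main obstacle to be getting the Möbius coefficient bound to exactly $3^{s_v-1}\sum_z\mu_{z,v}$. The crude count gives $\tfrac12(3^{s_v}-1)$, and the alternating-sum bound on $|g(T)|$ needs care. My first attempt would be $|g(T)|\le 2^{|T|-1}\max_{T'\subseteq T}h_v(T')$, obtained by splitting the sum into even and odd parts, each a sum of $2^{|T|-1}$ nonnegative terms. This gives $\sum_{T\ne\emptyset}2^{|T|-1}=(3^{s_v}-1)/2\le 3^{s_v-1}\cdot\frac32$. The remaining slack would have to be absorbed, and I would check whether the discrepancy step actually yields $\eta/2$ per term.
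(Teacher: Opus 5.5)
Your decomposition matches the paper's. You Möbius-invert over subsets of $S_v$ to get $h_v(B_A(x))=\sum_{T}g(T)\1_{C_T}(x)$, where $C_T=\bigcap_{z\in T}A_z$ is a cylinder intersection. You also derive $\bigl|\Pr_{U_v}[C]-\Pr_U[C]\bigr|\le\eta_{k,q,r}$ from the GIP discrepancy exactly as the paper does.

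\textbf{The gap is the coefficient estimate.} You bound every $h_v(T')$ by the total mass $m_v=\sum_z\mu_{z,v}$, which gives $\sum_T|g(T)|\le\tfrac12(3^{s_v}-1)m_v$. This exceeds $3^{s_v-1}m_v$ by $\tfrac12(3^{s_v-1}-1)m_v>0$ whenever $s_v\ge2$. So as written you prove the lemma only with $3^{s_v-1}$ replaced by $\tfrac12(3^{s_v}-1)$. That loss matters: the paper uses the exact constant downstream to get $2q3^q\eta_{k,q,r}\le1$, and hence the factor $\tfrac12$ in $\cprt\ge\tfrac12\prt$. Neither suggested repair works:
\begin{itemize}
\item The mixing step does not yield $\eta_{k,q,r}/2$. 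Your own bound $(\gamma+\gamma\Pr_U[C])/(1/q-\gamma)$ tends to $\eta_{k,q,r}$ as $\Pr_U[C]\to1$.
\item The mixing step does not yield a multiplicative bound $\eta_{k,q,r}\Pr_U[C]$ either, because the discrepancy error is additive.
\item The pairing idea is never actually specified.
\end{itemize}

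\textbf{The missing observation} is that the coefficient of $T$ should be charged to the $\mu$-mass of $T$ itself, not to $m_v$. Your even/odd split together with monotonicity of $h_v$ already gives $|g(T)|\le 2^{|T|-1}h_v(T)$ for $T\ne\varnothing$. Moreover, $h_v(T)=\max_b\sum_{z\in T,\,f(z,v)=b}\mu_{z,v}\le\sum_{z\in T}\mu_{z,v}$. Hence
\[
\sum_{\varnothing\ne T\subseteq S_v}|g(T)|\le\sum_{z\in S_v}\mu_{z,v}\sum_{T'\subseteq S_v\setminus\{z\}}2^{|T'|}=3^{s_v-1}m_v,
\]
which is exactly the required constant.

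The paper reaches the same number by a slightly different count. It bounds $|c_{v,T}|\le\sum_{R\subseteq T}g_v(\1_R)$, swaps the sums to get $\sum_R2^{s_v-|R|}g_v(\1_R)$, and then uses $g_v(\1_R)\le\sum_{z\in R}\mu_{z,v}$ together with $\sum_{R\ni z}2^{s_v-|R|}=3^{s_v-1}$. With this one-line change, the rest of your argument, including the sum over $v$, goes through.
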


We delay the proof of \Cref{lem:cell-threshold-transfer} to \Cref{sec:proof-of-lemmas}. 
Now we are ready to prove \Cref{thm:lift-partition}. 

\begin{proof}[Proof of \Cref{thm:lift-partition}]
    Fix $\eps=1/6$. Define
    \[
    E_\mu
    \defeq
    \eta_{k,q,r}
    \sum_{v,z}3^{s_v-1}\mu_{z,v}.
    \]
    
    Suppose $(\mu,\lambda,\Lambda)$ is feasible for the dual of the one-way
    partition bound with nonnegative objective value
    \[
    D_{\mu,\lambda}
    \defeq
    (1-\eps)
    \sum_{(z,v)\in \dom(f)}\mu_{z,v}
    -
    \sum_z\lambda_z.
    \]

    First, we show that 
    \begin{align*}
        \cprt_{1/6}^1 (f\circ\GIP_{q,r}^{k})
        \ge
        \frac{D_{\mu,\lambda}}{1+E_{\mu}}.
    \end{align*}

    Let $N_v=|\{x:\GIP(x)=v\}|$. Note that $N_v>0$ for every $v\in\F_q$. 
    Define
    \[
    \nu_{x,z} =\frac{\mu_{z,\GIP(x)}}{N_{\GIP(x)}},
    \qquad
    \theta_{x,z}=\frac{\lambda_z}{|\cX|}.
    \]
    Note that $(\nu, \theta)$ may not be feasible for the dual of the one-way cylinder partition bound, but the objective value \eqref{cprt-D-obj} of $(\nu, \theta)$ equals $D_{\mu,\lambda}$. To be precise, a direct calculation shows that the objective value of $(\nu,\theta)$ is 
    \begin{align*}
        (1-\eps) \sum_{(x,z)\in \dom(F)} \nu_{x,z}
        -\sum_{x,z} \theta_{x,z}
        &=(1-\eps) \sum_{(z,v)\in \dom(f)} \mu_{z,v}
        -\sum_{z} \lambda_z\\
        &=D_{\mu,\lambda}. 
    \end{align*}

    Applying \eqref{eq:h-lambda} for $B_A(x)$, we have 
    \begin{align*}
        \sum_v h_v \left(B_A(x)\right)
        \leq 1+ \sum_{z\in B_A(x)}\lambda_z. 
    \end{align*} 
    Taking the expectation over $x\sim U$, we obtain 
    \begin{align*}
        \sum_v \E_{x\sim U}
        \left[h_v \left(B_A(x)\right)\right]
        \leq 1+ \E_{x\sim U} \left[\sum_{z\in B_A(x)}\lambda_z\right]. 
    \end{align*} 
    
    By \Cref{lem:cell-threshold-transfer}, for any $A\in \cA_k$, 
    \begin{align}
        \sum_{v} \E_{x\sim U_v}[h_v \left(B_A(x)\right)]
        &\leq 
        \sum_{v} \E_{x\sim U}[h_v \left(B_A(x)\right)]+E_{\mu} \nonumber\\
        &=
        \E_{x\sim U} \left[\sum_{v} h_v \left(B_A(x)\right)\right]+E_{\mu}\nonumber\\
        &\overset{\rm (a)}{\leq} \E_{x\sim U} \left[\sum_{z\in B_A(x)}\lambda_z\right]+1
        +E_{\mu} \nonumber
        \\
        &=
        \sum_{(x,z)\in A}\theta_{x,z}+1+E_{\mu}. \label{eq:add-error-cons}
    \end{align}
    where (a) follows from \eqref{eq:h-lambda}. 
    Note that 
    \begin{align*}
        \sum_v\E_{x\sim U_v}h_v(B_A(x))
        =
        \sum_x \max_{b\in \bits}
        \sum_{\substack{z:\,(x,z)\in A\\
                    (x,z)\in \dom(F)\\
                    F(x,z)=b}}
        \nu_{x,z}.
    \end{align*}
    Then \eqref{eq:add-error-cons} becomes 
    \begin{align} \label{eq:first-condition}
        \sum_{x} \max_{b\in \bits} \sum_{\substack{z:\,(x,z)\in A\\
        (x,z)\in \dom(F)\\
        F(x,z)=b}}
        \nu_{x,z}
        \le \sum_{(x,z)\in A} \theta_{x,z}+1+E_{\mu}. 
    \end{align}
    Compared to \eqref{cprt-D-reduced-constraint}, it has an additive error $E_{\mu}$. We now define new variables compensating the error. 
    Define the scaled variables
\[
\widetilde{\nu}_{x,z}
\defeq
\frac{\nu_{x,z}}{1+E_\mu},
\qquad
\widetilde{\theta}_{x,z}
\defeq
\frac{\theta_{x,z}}{1+E_\mu}.
\]
By \eqref{eq:first-condition},
\[
\sum_x
\max_{b\in\bits}
\sum_{\substack{
z:\,(x,z)\in A\\
(x,z)\in \dom(F)\\
F(x,z)=b}}
\widetilde{\nu}_{x,z}
\le
1+
\sum_{(x,z)\in A}\widetilde{\theta}_{x,z}.
\]
Thus $(\widetilde{\nu},\widetilde{\theta})$ is feasible for the reduced
dual obtained after eliminating the variables $\Gamma$.

To obtain a feasible witness for the original dual, define
\[
\widetilde{\Gamma}_{A,x}
\defeq
\max_{b\in\bits}
\sum_{\substack{
z:\,(x,z)\in A\\
(x,z)\in \dom(F)\\
F(x,z)=b}}
\widetilde{\nu}_{x,z}.
\]
Then \eqref{cprt-D-nu-Gamma} holds by definition, while the preceding
inequality gives \eqref{cprt-D-Gamma-theta}. Hence $(\widetilde{\nu},\widetilde{\theta},\widetilde{\Gamma})$ 
is feasible for the full one-way cylinder partition bound dual.

    Then the objective value \eqref{cprt-D-obj} of $(\widetilde{\nu},\widetilde{\theta})$ is 
    \begin{align*}
        (1-\eps) \sum_{(x,z)\in \dom(F)} \widetilde{\nu}_{x,z}
        -\sum_{x,z} \widetilde{\theta}_{x,z}
        &=\frac{D_{\mu,\lambda}}{1+E_{\mu}}. 
    \end{align*}
    Thus, 
    \begin{align*}
        \cprt_{1/6}^1 (F)
        \geq \frac{D_{\mu,\lambda}}{1+E_{\mu}}. 
    \end{align*}
    Note that 
    \begin{align*}
E_\mu
&=
\eta_{k,q,r}
\sum_{v,z}3^{s_v-1}\mu_{z,v}\\
&\overset{\rm (a)}{\leq}
\eta_{k,q,r}3^{q-1}
\sum_{z,v}\mu_{z,v}\\
&\overset{\rm (b)}{\leq}
6q3^{q-1}\eta_{k,q,r}\\
&=
2q3^q\eta_{k,q,r}.
\end{align*}
Here (a) follows from $s_v\le q$, and (b) follows from
\Cref{lem:dual-mass} with $\eps=1/6$.

Therefore,
\[
\cprt_{1/6}^1(F)
\ge
\frac{D_{\mu,\lambda}}
{1+2q3^q\eta_{k,q,r}}.
\]
Optimizing over feasible solutions and applying strong
duality, we obtain
\[
\cprt_{1/6}^1
\bigl(f\circ\GIP_{q,r}^k\bigr)
\ge
\frac{\prt_{1/6}^1(f)}
{1+2q3^q\eta_{k,q,r}}.
\]
This finishes the proof.

\end{proof}

\section{Proof of \Cref{lem:cell-threshold-transfer}}
\label{sec:proof-of-lemmas}

We need the following lemmas to prove \Cref{lem:cell-threshold-transfer}. 

\begin{lemma}[M\"obius inversion \cite{stanley2011enumerative}]
\label{lem:mobius}
Let $\Omega$ be a finite set, and let
$f,g:2^\Omega\to\R$ satisfy
\begin{align} \label{eq:monius-g-f}
    g(S)=\sum_{T\subseteq S}f(T),
    \qquad \forall S\subseteq\Omega.
\end{align}
Then, for every $S\subseteq\Omega$,
\begin{align}
\label{eq:monius-f-g}
    f(S)
    =
    \sum_{T\subseteq S}
    (-1)^{|S|-|T|}g(T).
\end{align}
\end{lemma}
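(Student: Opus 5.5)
This is the classical Möbius inversion on the Boolean lattice $2^\Omega$, and I will prove it by direct substitution. I will show that the right-hand side of \eqref{eq:monius-f-g} equals $f(S)$ whenever $g$ is defined from $f$ by \eqref{eq:monius-g-f}.

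Fix $S\subseteq\Omega$. Substituting \eqref{eq:monius-g-f} into the right-hand side gives
\[
\sum_{T\subseteq S}(-1)^{|S|-|T|}g(T)
=\sum_{T\subseteq S}(-1)^{|S|-|T|}\sum_{R\subseteq T}f(R).
\]
Exchanging the order of summation, with $R\subseteq T\subseteq S$, turns this into
\[
\sum_{R\subseteq S}f(R)\sum_{T:\,R\subseteq T\subseteq S}(-1)^{|S|-|T|}.
\]
The intervals involved are finite, so the exchange is immediate.

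The key step is the evaluation of the inner sum. Sets $T$ with $R\subseteq T\subseteq S$ correspond bijectively to subsets $W=T\setminus R$ of $S\setminus R$. Under this correspondence $|S|-|T|=|S\setminus R|-|W|$. Setting $m=|S\setminus R|$, the binomial theorem gives
\[
\sum_{W\subseteq S\setminus R}(-1)^{m-|W|}=\sum_{j=0}^{m}\binom{m}{j}(-1)^{m-j}=(1-1)^m,
\]
which equals $1$ if $m=0$ and $0$ otherwise. Only the term $R=S$ survives, so the whole expression equals $f(S)$.

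No step is a real obstacle. The only point needing care is the sign bookkeeping in the binomial identity, in particular the convention $0^0=1$ that handles the case $R=S$.
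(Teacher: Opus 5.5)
Your proof is correct and complete: substituting \eqref{eq:monius-g-f}, exchanging the finite double sum over $R\subseteq T\subseteq S$, and evaluating the inner sum as $(1-1)^{|S\setminus R|}$ is the standard argument for this statement. The paper does not prove the lemma itself but cites Stanley, where the result appears as a special case of M\"obius inversion on a general finite poset, using the fact that the M\"obius function of the Boolean lattice is $(-1)^{|S|-|T|}$. Your binomial computation is that M\"obius-function calculation done directly, so your argument is self-contained and needs no incidence-algebra machinery. No $0^0$ convention is actually needed: when $R=S$, the inner sum consists of the single term $W=\varnothing$, which contributes $(-1)^0=1$.
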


Recall that for fixed $v$, we write
\[
S_v=\{z:\mu_{z,v}>0\},
\qquad
s_v=|S_v|.
\]
For $u\in\bits^{S_v}$ define
\[
g_v(u)
=\max_{b\in\bits}
\sum_{\substack{z\in S_v\\f(z,v)=b}}
\mu_{z,v}u_z.
\]
Then
\[
h_v(B_A(x))
=g_v\bigl((\1_{A_z}(x))_{z\in S_v}\bigr).
\]

Define 
\begin{align*}
    m_v\defeq \sum_z \mu_{z,v}. 
\end{align*}
 
\begin{lemma}
\label{lem:inversion}

Write the unique multilinear expansion
\[
g_v(u)=\sum_{T\subseteq S_v}c_{v,T}\prod_{z\in T}u_z.
\]
Then
\[
\sum_{T\subseteq S_v}|c_{v,T}|
\le3^{s_v-1}m_v.
\]
\end{lemma}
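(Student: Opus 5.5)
We compute the coefficients by M\"obius inversion (\Cref{lem:mobius}) applied on the Boolean cube $\bits^{S_v}$. Identify $u\in\bits^{S_v}$ with the subset $\{z:u_z=1\}\subseteq S_v$, and write $G(T)\defeq g_v(\1_T)$ for $T\subseteq S_v$. Evaluating the multilinear expansion at $u=\1_S$ gives
\[
G(S)=\sum_{T\subseteq S}c_{v,T}
\]
for every $S\subseteq S_v$. Hence \Cref{lem:mobius} yields
\[
c_{v,T}=\sum_{R\subseteq T}(-1)^{|T|-|R|}G(R),
\]
and uniqueness of the expansion is immediate from this formula.

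Next, rather than bounding each $|c_{v,T}|$ crudely by $\sum_{R\subseteq T}G(R)$, we exploit the structure of $g_v$. Write $G(R)=\max(a(R),b(R))$, where
\[
a(R)=\sum_{z\in R,\,f(z,v)=0}\mu_{z,v},\qquad b(R)=\sum_{z\in R,\,f(z,v)=1}\mu_{z,v}.
\]
Both $a$ and $b$ are modular. Using $\max(a,b)=\tfrac12(a+b)+\tfrac12|a-b|$, the modular part $\tfrac12(a+b)$ has Möbius coefficients supported on singletons, and these sum in absolute value to $\tfrac12 m_v$. It therefore remains to bound the coefficients of $\tfrac12|d(R)|$, where $d(R)=\sum_{z\in R}\sigma_z\mu_{z,v}$ with signs $\sigma_z=\pm1$.

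The direct bound $|c_{v,T}|\le\sum_{R\subseteq T}\tfrac12|d(R)|\le\tfrac12\sum_{R\subseteq T}\sum_{z\in R}\mu_{z,v}$ gives
\[
\sum_T|c_{v,T}|\le\tfrac12\sum_{z}\mu_{z,v}\cdot\#\{(R,T):z\in R\subseteq T\subseteq S_v\}.
\]
The number of such pairs is $3^{s_v-1}$, since each of the other $s_v-1$ elements lies in $R$, in $T\setminus R$, or outside $T$. So this part contributes at most $\tfrac12 3^{s_v-1}m_v$. Adding the singleton contribution $\tfrac12 m_v\le\tfrac12 3^{s_v-1}m_v$ gives the claimed bound $3^{s_v-1}m_v$.

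In fact, one can skip the max decomposition entirely. Since $0\le G(R)\le\sum_{z\in R}\mu_{z,v}$ (as $\mu\ge0$), the same counting applied directly to $|c_{v,T}|\le\sum_{R\subseteq T}G(R)$ gives
\[
\sum_T|c_{v,T}|\le\sum_z\mu_{z,v}\,3^{s_v-1}=3^{s_v-1}m_v.
\]
This simpler route is what I would write: the Möbius expansion, the triangle inequality, the pointwise bound $G(R)\le\sum_{z\in R}\mu_{z,v}$, and the exchange of summation with the $3^{s_v-1}$ count.

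The only step needing care is this counting of chains $\{z\}\subseteq R\subseteq T\subseteq S_v$. Also note the empty set: $G(\emptyset)=0$, so terms with $R=\emptyset$ vanish. When $s_v=0$, $g_v\equiv0$ and both sides are zero.
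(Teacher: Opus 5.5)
Your proposal is correct and is essentially the paper's proof. Both use M\"obius inversion on the values $g_v(\1_R)$, the triangle inequality with $g_v\ge 0$, the pointwise bound $g_v(\1_R)\le\sum_{z\in R}\mu_{z,v}$, and a count giving $3^{s_v-1}$. The paper does this count in two steps (first $2^{s_v-|R|}$ choices of $T\supseteq R$, then a binomial sum over $R\ni z$), whereas you count the chains $z\in R\subseteq T\subseteq S_v$ directly. Your max-decomposition detour is also valid and even gives the slightly sharper bound $\tfrac12(1+3^{s_v-1})m_v$, but, as you note, it is not needed.
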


\begin{proof}
If $S_v=\varnothing$, then $g_v\equiv 0$, and the claim is immediate.
Hence, assume that $s_v\geq 1$.

For every $R\subseteq S_v$, let $\1_R\in\bits^{S_v}$ denote the
indicator vector of $R$. Evaluating the multilinear expansion at $\1_R$,
we obtain
\begin{align}
    g_v(\1_R)
    &=
    \sum_{T\subseteq S_v}
    c_{v,T}
    \prod_{z\in T}\1_R(z)\nonumber\\
    &=
    \sum_{T\subseteq R}c_{v,T},
    \label{eq:gv-subset-transform}
\end{align}
because
\[
\prod_{z\in T}\1_R(z)
=
\begin{cases}
1, & T\subseteq R,\\
0, & T\not\subseteq R.
\end{cases}
\]

By \Cref{lem:mobius}, 
\begin{align}
    c_{v,T}
    =
    \sum_{R\subseteq T}
    (-1)^{|T|-|R|}
    g_v(\1_R).
    \label{eq:mobius-cvt}
\end{align}
Since $g_v(u)\geq 0$ for every $u\in\bits^{S_v}$, it follows that
\begin{align*}
    |c_{v,T}|
    &\leq
    \sum_{R\subseteq T}g_v(\1_R).
\end{align*}
Summing over $T\subseteq S_v$, we obtain
\begin{align}
    \sum_{T\subseteq S_v}|c_{v,T}|
    &\leq
    \sum_{T\subseteq S_v}
    \sum_{R\subseteq T}
    g_v(\1_R)\nonumber\\
    &=
    \sum_{R\subseteq S_v}
    2^{s_v-|R|}
    g_v(\1_R).
    \label{eq:coef-first-bound}
\end{align}
Indeed, for each fixed $R\subseteq S_v$, there are exactly
$2^{s_v-|R|}$ sets $T$ satisfying
\[
R\subseteq T\subseteq S_v.
\]

By the definition of $g_v$,
\begin{align*}
    g_v(\1_R)
    &=
    \max_{b\in\bits}
    \sum_{\substack{z\in R\\f(z,v)=b}}
    \mu_{z,v}\\
    &\leq
    \sum_{z\in R}\mu_{z,v}.
\end{align*}
Substituting this into \eqref{eq:coef-first-bound} gives
\begin{align*}
    \sum_{T\subseteq S_v}|c_{v,T}|
    &\leq
    \sum_{R\subseteq S_v}
    2^{s_v-|R|}
    \sum_{z\in R}\mu_{z,v}\\
    &=
    \sum_{z\in S_v}\mu_{z,v}
    \sum_{\substack{R\subseteq S_v\\z\in R}}
    2^{s_v-|R|}.
\end{align*}

Fix $z\in S_v$. Every set $R\subseteq S_v$ containing $z$ can be
written uniquely as
\[
R=\{z\}\cup R',
\qquad
R'\subseteq S_v\setminus\{z\}.
\]
Therefore,
\begin{align*}
    \sum_{\substack{R\subseteq S_v\\z\in R}}
    2^{s_v-|R|}
    &=
    \sum_{R'\subseteq S_v\setminus\{z\}}
    2^{s_v-1-|R'|}\\
    &=
    \sum_{j=0}^{s_v-1}
    \binom{s_v-1}{j}
    2^{s_v-1-j}\\
    &=
    (2+1)^{s_v-1}\\
    &=
    3^{s_v-1}.
\end{align*}
Consequently,
\begin{align*}
    \sum_{T\subseteq S_v}|c_{v,T}|
    &\leq
    3^{s_v-1}
    \sum_{z\in S_v}\mu_{z,v}\\
    &=
    3^{s_v-1}m_v.
\end{align*}
This completes the proof.
\end{proof}

The following absolute discrepancy estimate follows from the
character-sum argument underlying the disperser lemma in
\cite[Lemma 3.2]{yang2025deterministic}. 
\begin{lemma}
\label{lem:absolute-discrepancy}
Suppose $q$ is prime, $k\ge2$, and $r\ge1$. Let $U$ be uniform on $\cX=(\F_q^r)^k$. 
Then, for every cylinder intersection $C\subseteq\cX$ and every
$v\in\F_q$,
\[
\abs{
\Pr[U\in C,\GIP(U)=v]
-\frac{\Pr[U\in C]}q
}
\le
\gamma_{k,q,r}
\defeq
\left(\frac{2k}{q}\right)^{r/2^{k-1}}.
\]
\end{lemma}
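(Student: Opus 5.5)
The plan is to expand the indicator of the event $\GIP(x)=v$ in additive characters of $\F_q$, and then bound each nontrivial character term by the Babai--Nisan--Szegedy repeated Cauchy--Schwarz argument. Let $\omega=e^{2\pi i/q}$. Since $q$ is prime, $\1\{\GIP(x)=v\}=\frac1q\sum_{a\in\F_q}\omega^{a(\GIP(x)-v)}$. The $a=0$ term contributes exactly $\Pr[U\in C]/q$, so
\[
\Pr[U\in C,\GIP(U)=v]-\frac{\Pr[U\in C]}{q}
=\frac1q\sum_{a\neq0}\omega^{-av}\,\E_{x\sim U}\bigl[\1_C(x)\,\omega^{a\GIP(x)}\bigr].
\]
It therefore suffices to show that, for every $a\neq0$ and every cylinder intersection $C$,
\[
\bigl|\E_x[\1_C(x)\omega^{a\GIP(x)}]\bigr|\le\gamma_{k,q,r}.
\]
The average of $q-1$ such terms, divided by $q$, is then at most $\frac{q-1}{q}\gamma_{k,q,r}\le\gamma_{k,q,r}$.

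For the character term, write $\1_C=\prod_{i=1}^k\phi_i(x_{-i})$. Apply Cauchy--Schwarz $k-1$ times, once for each of the coordinates $x_1,\ldots,x_{k-1}$. At each step, take absolute values, pull out the expectation over the coordinate on which the current factor $\phi_i$ does not depend, and square so that $\phi_i$ disappears because $|\phi_i|\le1$. The remaining factors get duplicated into copies indexed by $x_i^0,x_i^1$. After all $k-1$ steps, only $\phi_k$ (which does not depend on $x_k$) remains inside, and it is killed by an absolute value inside the outer expectation. This yields
\[
\bigl|\E[\1_C\omega^{a\GIP}]\bigr|^{2^{k-1}}
\le\E_{x_{<k}^0,x_{<k}^1}\Bigl|\E_{x_k}\prod_{\epsilon\in\bits^{k-1}}\omega^{\pm a\GIP(x_1^{\epsilon_1},\ldots,x_{k-1}^{\epsilon_{k-1}},x_k)}\Bigr|.
\]
The signs are $(-1)^{|\epsilon|}$, up to conjugation.

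Because $\GIP$ is multilinear, the signed sum of exponents collapses to
\[
a\sum_{j=1}^r x_{k,j}\prod_{i<k}\bigl(x^0_{i,j}-x^1_{i,j}\bigr).
\]
For fixed $x^0,x^1$, the inner expectation over uniform $x_k\in\F_q^r$ is therefore $1$ if $\prod_{i<k}(x^0_{i,j}-x^1_{i,j})=0$ for every $j$, and $0$ otherwise; here $a\neq0$ and primality of $q$ are used. Hence the right-hand side equals
\[
\bigl(1-(1-1/q)^{k-1}\bigr)^r\le\Bigl(\frac{k-1}{q}\Bigr)^r\le\Bigl(\frac{2k}{q}\Bigr)^r.
\]
Taking $2^{k-1}$-th roots gives the bound $\gamma_{k,q,r}$.

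The main obstacle is the bookkeeping of the iterated Cauchy--Schwarz. At each stage one must check that the factor being eliminated is indeed independent of the coordinate currently averaged, with the earlier-duplicated coordinates treated as fixed, and that the signed product of characters has exactly the claimed multilinear difference form. I would set this up as an induction on the number of eliminated cylinder functions, following the standard discrepancy proof for generalized inner product in \cite{babai1992multiparty} as adapted in \cite[Lemma 3.2]{yang2025deterministic}. The loose constant $2k$ in place of $k-1$ leaves ample slack.
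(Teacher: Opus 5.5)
Your proposal is correct and follows essentially the same route as the paper: an additive-character expansion that reduces the claim to $\bigl|\E[\1_C(U)\,e_q(\alpha\GIP(U))]\bigr|\le\gamma_{k,q,r}$ for $\alpha\ne0$, followed by $k-1$ rounds of Cauchy--Schwarz. These rounds reduce to the probability that $\prod_{i<k}(x^0_{i,j}-x^1_{i,j})$ vanishes in every coordinate $j$, which is at most $((k-1)/q)^r$. The only difference is bookkeeping: the paper runs the induction through the recursion $M_m(a)^2\le\E_D M_{m-1}(a\odot D)$, absorbing each difference into the coefficient vector, whereas you expand the full $2^{k-1}$-fold cube and collapse the signed exponent by multilinearity at the end.
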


\begin{proof}
For $t\in\F_q$, define the additive character $e_q(t)=\exp\left(\frac{2\pi i t}{q}\right)$. 
Then we have the identity
\begin{equation}
\label{eq:elementary-character-average}
\frac1q\sum_{t\in\F_q}e_q(bt)
=
\begin{cases}
1, & b=0,\\
0, & b\ne0.
\end{cases}
\end{equation}

For $m\ge1$ and $a=(a_1,\ldots,a_r)\in\F_q^r$, define 
\begin{align*}
    M_m(a)
    =
    \max_{\psi_1,\ldots,\psi_m}
    \abs{
    \E_{x_1,\ldots,x_m}
    \left[
    \prod_{i=1}^m\psi_i(x_{-i})
    e_q\left(
    \GIP_{q,r}^{m+1}(a,x_1,\dots,x_m)
    \right)
    \right]
    }
\end{align*}
where each $\psi_i:(\F_q^r)^{m-1}\to\bits$ is independent
of $x_i$. Equivalently, the maximization is over all cylinder sets with $m$ factors. For $m=1$, 
\begin{align*}
    M_1(a)
    =
    \max_{\psi_1\in \bits}
    \abs{
    \E_{x_1}
    \left[
        \psi_1(x_{-1})
    e_q\left(
    \GIP_{q,r}^2(a,x_1)
    \right)
    \right]
    }.
\end{align*}
When $\psi_1(x_{-1})=0$, the objective value is $0$. By the nonnegativity, we may choose $\psi_1(x_{-1})=1$. Then by \eqref{eq:elementary-character-average}, we have 
\begin{align*}
    M_1(a)
    &=
    \abs{
    \E_{x_1}
    \left[
    e_q\left(
    \GIP_{q,r}^2(a,x_1)
    \right)
    \right]
    }\\
    &=\1_{\{a=0\}}
\end{align*}

We claim that for $m\ge2$,
\begin{equation}
\label{eq:one-step-cylinder-cs}
M_m(a)^2
\le
\E_D \left[ M_{m-1}(a\odot D)\right],
\end{equation}
where $D=(D_1,\dots,D_r)$ is uniform in $\F_q^r$ and
$a\odot D=(a_1D_1,\ldots,a_rD_r)$.

To prove this, fix $\psi_1,\dots,\psi_m$, write
$X=(x_1,\ldots,x_{m-1})$, and for $u\in \F_q^r$, denote 
\[
J(X,u)
=
\prod_{i=1}^{m-1}\psi_i(X_{-i},u)
e_q\left(
\GIP_{q,r}^{m+1}(a,u,x_1,\dots,x_{m-1})
\right).
\]
Since $|\psi_m|\le1$, by Cauchy--Schwarz inequality, we have 
\begin{align}
\abs{\E_X\left[\psi_m(X)\E_u \left[J(X,u)\right] \right]}^2
&\stackrel{\rm (a)}{\le}
\E_X \left[\abs{\E_u [J(X,u)] }^2\right] \nonumber\\
&\stackrel{\rm (b)}{=}
\E_X \left[\E_u
\left[J(X,u)\right]\cdot \overline{\E_v
\left[J(X,v)\right]} \right]\nonumber\\
&=
\E_{u,v} \left[\E_X
\left[J(X,u)\overline{J(X,v)}\right]\right]\nonumber\\
&\le
\E_{u,v}
\left[\abs{\E_X
\left[J(X,u)\overline{J(X,v)}\right]}\right] \label{eq:psi-J-JJ}
\end{align}
where $\overline{z}$ denote the complex conjugation of the complex number $z$ and $u,v$ are independent and both uniformly distributed over $\F_q^r$. To be precise, (a) follows from Cauchy--Schwarz inequality and (b) follows from the fact that $v$ is an independent copy of $u$. For fixed $u,v$, 
\begin{align*}
    \abs{\E_X
    \left[J(X,u)\overline{J(X,v)}\right]}
    &=\abs{
        \E_X \left[
            \prod_{i=1}^{m-1}\psi_{i}(X_{-i},u) \psi_{i}(X_{-i},v)
            e_q\left(
            \GIP_{q,r}^{m+1}(a,u-v,x_1,\dots,x_{m-1})
            \right)
        \right]
    }\\
    &=\abs{
        \E_X \left[
            \prod_{i=1}^{m-1}\psi_{i}(X_{-i},u) \psi_{i}(X_{-i},v)
            e_q\left(
            \GIP_{q,r}^{m}(a\odot (u-v),x_1,\dots,x_{m-1})
            \right)
        \right]
    }\\
    &\leq M_{m-1}\bigl(a\odot(u-v)\bigr)
\end{align*}
where the last inequality follows from the fact that $\prod_{i=1}^{m-1}\psi_{i}(X_{-i},u) \psi_{i}(X_{-i},v)$ is a cylinder set with $m-1$ factors and the definition of $M_{m-1}$. Combined with \eqref{eq:psi-J-JJ}, we obtain 
\begin{align*}
    \abs{\E_X\left[\psi_m(X)\E_u \left[J(X,u)\right] \right]}^2
    \leq \E_{u,v} \left[M_{m-1}\bigl(a\odot(u-v)\bigr)\right]. 
\end{align*} 
Since $u-v$ is uniform in $\F_q^r$, maximizing over the left hand side yields 
\eqref{eq:one-step-cylinder-cs}.

Iterating \eqref{eq:one-step-cylinder-cs}, using
$(\E Z)^2\le\E[Z^2]$ to move each square inside the
expectation, yields
\begin{align}
M_k(a)^{2^{k-1}}
&\le
\E_{D_2,\ldots,D_k}
M_1(a\odot D_2\odot\cdots\odot D_k) \nonumber\\
&=
\Pr\left[
a\odot D_2\odot\cdots\odot D_k=0
\right] \nonumber\\
&=\prod_{j=1}^r \Pr \left[a_j\prod_{i=2}^k D_{i,j}=0\right] \label{eq:Mk-prod}
\end{align}
where the last line follows from the mutual independence of $D_2,\dots,D_k$. 
Now fix $\alpha\ne0$ and take $a=(\alpha,\ldots,\alpha)$.
For each coordinate $j$, by the union bound, 
\[
\Pr\left[\prod_{i=2}^k D_{i,j}=0\right]
\le
\sum_{i=2}^k\Pr[D_{i,j}=0]
=
\frac{k-1}{q}.
\]
Combined with \eqref{eq:Mk-prod}, we have 
\[
M_k(\alpha,\ldots,\alpha)^{2^{k-1}}
\le
\left(\frac{k-1}{q}\right)^r
\le
\left(\frac{2k}{q}\right)^r.
\]
For arbitrary cylinder set $C\subseteq \cX$ and $\alpha \neq 0$, we then have 
\begin{equation}
\label{eq:cylinder-character}
\abs{
\E_U\left[\1_C(U)e_q(\alpha\GIP(U))\right]
}
\le
\left(\frac{2k}{q}\right)^{r/2^{k-1}}=\gamma_{k,q,r}.
\end{equation}

Finally, \eqref{eq:elementary-character-average} implies 
\begin{align} \label{eq:1-1/q}
    \1_{\{t=v\}}-\frac1q
    =
    \frac1q
    \sum_{\alpha\in\F_q:\alpha\neq0}e_q\bigl(\alpha(t-v)\bigr).
\end{align}
Therefore, 
\begin{align*}
\abs{
\Pr[U\in C,\GIP(U)=v]
-\frac{\Pr[U\in C]}q
}
&= \abs{
    \E \left[\1_C(U) \left(\1_{\{\GIP(U)=v\}}-\frac{1}{q}\right) \right]
}
\\
&\stackrel{\rm (a)}{=}
\abs{
\frac1q\sum_{\alpha\in\F_q:\alpha\neq0}e_q(-\alpha v)
\E_U\left[\1_C(U)e_q(\alpha\GIP(U))\right]
}\\
&\stackrel{\rm (b)}{\leq }
\frac1q\sum_{\alpha\in\F_q:\alpha\neq0}
\abs{
\E_U\left[\1_C(U)e_q(\alpha\GIP(U))\right]
}\\
&\stackrel{\rm (c)}{\leq }
\frac{q-1}{q}\gamma_{k,q,r}\\
&\le
\gamma_{k,q,r},
\end{align*}
where (a) follows from \eqref{eq:1-1/q}, (b) follows from $|e_q(-\alpha v)|= 1$ and (c) follows from \eqref{eq:cylinder-character}. It finishes the proof.

\end{proof}

\begin{proof}[Proof of \Cref{lem:cell-threshold-transfer}]
For every $z\in\cZ$, define 
\[
A_z
\defeq
\{x\in\cX:(x,z)\in A\}.
\]
Since $A\in\cA_k$, each $A_z$ is a cylinder intersection in $\cX$.

Fix $v\in\cV$. Recall that
\[
g_v(u)
=
\max_{b\in\bits}
\sum_{\substack{z\in S_v\\f(z,v)=b}}
\mu_{z,v}u_z,
\qquad
u\in\bits^{S_v}.
\]
For every $x\in\cX$, we have
\[
z\in B_A(x)
\quad\Longleftrightarrow\quad
x\in A_z.
\]
Therefore,
\begin{align}
h_v(B_A(x))
&=
g_v\left(
\bigl(\1_{A_z}(x)\bigr)_{z\in S_v}
\right).
\label{eq:hv-gv-substitution}
\end{align}

Write the unique multilinear expansion
\[
g_v(u)
=
\sum_{T\subseteq S_v}
c_{v,T}\prod_{z\in T}u_z.
\]
Substituting $u_z=\1_{A_z}(x)$ into
\eqref{eq:hv-gv-substitution}, we obtain
\begin{align}
h_v(B_A(x))
&=
\sum_{T\subseteq S_v}
c_{v,T}
\prod_{z\in T}\1_{A_z}(x)\nonumber\\
&=
\sum_{T\subseteq S_v}
c_{v,T}
\1_{C_T}(x),
\label{eq:hv-cylinder-expansion}
\end{align}
where
\[
C_T
\defeq
\bigcap_{z\in T}A_z.
\]
For $T=\varnothing$, we use the convention \(C_\varnothing=\cX\). 

Since every $A_z$ is a cylinder intersection and the intersection of
cylinder intersections is again a cylinder intersection, $C_T$ is a
cylinder intersection for every $T\subseteq S_v$.

We next derive the conditional mixing bound from
\Cref{lem:absolute-discrepancy}. Let $C\subseteq\cX$ be any cylinder
intersection, and write
\[
a
\defeq
\Pr[U\in C,\GIP(U)=v],
\qquad
s
\defeq
\Pr[U\in C],
\qquad
p_v
\defeq
\Pr[\GIP(U)=v].
\]
By \Cref{lem:absolute-discrepancy},
\[
\abs{a-\frac{s}{q}}
\leq
\gamma_{k,q,r}.
\]
Applying the same lemma with $C=\cX$ gives
\[
\abs{p_v-\frac1q}
\leq
\gamma_{k,q,r}.
\]
In particular,
\[
p_v\geq \frac1q-\gamma_{k,q,r}>0.
\]
Since $U_v$ is the distribution of $U$ conditioned on
$\GIP(U)=v$, we have
\begin{align*}
\abs{\Pr[U_v\in C]-\Pr[U\in C]}
&=
\abs{\frac{a}{p_v}-s}\\
&=
\frac{\abs{a-sp_v}}{p_v}\\
&\leq
\frac{
\abs{a-s/q}
+
s\abs{p_v-1/q}
}{
p_v
}\\
&\leq
\frac{2\gamma_{k,q,r}}
{1/q-\gamma_{k,q,r}}\\
&=
\eta_{k,q,r}.
\end{align*}
Equivalently,
\begin{align}
\abs{
\E_{x\sim U_v}[\1_C(x)]
-
\E_{x\sim U}[\1_C(x)]
}
\leq
\eta_{k,q,r}.
\label{eq:conditional-cylinder-mixing}
\end{align}

Using \eqref{eq:hv-cylinder-expansion}, the triangle inequality, and
\eqref{eq:conditional-cylinder-mixing}, we obtain
\begin{align*}
&\abs{
\E_{x\sim U_v}h_v(B_A(x))
-
\E_{x\sim U}h_v(B_A(x))
}\\
&\qquad=
\abs{
\sum_{T\subseteq S_v}
c_{v,T}
\left(
\E_{x\sim U_v}\1_{C_T}(x)
-
\E_{x\sim U}\1_{C_T}(x)
\right)
}\\
&\qquad\leq
\sum_{T\subseteq S_v}
|c_{v,T}|
\abs{
\E_{x\sim U_v}\1_{C_T}(x)
-
\E_{x\sim U}\1_{C_T}(x)
}\\
&\qquad\leq
\eta_{k,q,r}
\sum_{T\subseteq S_v}|c_{v,T}|.
\end{align*}
By \Cref{lem:inversion},
\[
\sum_{T\subseteq S_v}|c_{v,T}|
\leq
3^{s_v-1}m_v,
\]
where
\[
m_v=\sum_z\mu_{z,v}.
\]
Consequently,
\[
\abs{
\E_{x\sim U_v}h_v(B_A(x))
-
\E_{x\sim U}h_v(B_A(x))
}
\leq
\eta_{k,q,r}3^{s_v-1}
\sum_z\mu_{z,v}.
\]

Finally, summing the preceding inequality over $v\in\cV$ gives
\begin{align*}
&\sum_v
\abs{
\E_{x\sim U_v}h_v(B_A(x))
-
\E_{x\sim U}h_v(B_A(x))
}\\
&\qquad\leq
\eta_{k,q,r}
\sum_v3^{s_v-1}
\sum_z\mu_{z,v}\\
&\qquad=
\eta_{k,q,r}
\sum_{z,v}3^{s_v-1}\mu_{z,v}.
\end{align*}
This completes the proof.
\end{proof}

\section*{Acknowledgments}

ChatGPT-5.6 Sol was used to assist with deriving algebraic manipulations in the dual-transfer argument.

\bibliographystyle{alpha}
\bibliography{refs}

\end{document}

%% file: macros.tex
\usepackage{
    amsmath,
    amsthm,
    amssymb,
    mathrsfs,
    mathtools,
    bbm
}

\usepackage[margin=1in]{geometry}

\usepackage{hyperref}
\usepackage[shortlabels]{enumitem}
\usepackage{xcolor}
\usepackage{framed}
\usepackage{tcolorbox}
\usepackage{graphicx}
\usepackage{cleveref}

\definecolor{DarkRed}{rgb}{0.8,0.0,0.0}
\definecolor{DarkGreen}{rgb}{0.0,0.7,0.0}

\hypersetup{
    colorlinks = true,
    linkcolor  = DarkRed,
    citecolor  = DarkGreen,
    urlcolor   = DarkRed
}

\theoremstyle{plain}

\newtheorem{theorem}{Theorem}

\newtheorem{lemma}[theorem]{Lemma}

\theoremstyle{definition}

\newtheorem{definition}{Definition}

\newcommand{\E}{\mathbb{E}}

\newcommand{\R}{\mathbb{R}}

\newcommand{\defeq}{\triangleq}

\newcommand{\abs}[1]{\left\lvert #1 \right\rvert}

\newcommand{\1}{\mathbbm{1}}

\makeatletter
\def\rv{%
    \@ifnextchar\bgroup{\rv@i}{\rv@ii}%
}
\def\rv@i#1{\mathbf{#1}}
\def\rv@ii#1{\mathbf{#1}}
\makeatother

\newcommand{\cA}{\mathcal{A}}

\newcommand{\cV}{\mathcal{V}}

\newcommand{\cX}{\mathcal{X}}
\newcommand{\cY}{\mathcal{Y}}
\newcommand{\cZ}{\mathcal{Z}}
